\documentclass{article}

\usepackage[utf8]{inputenc}
\usepackage{fullpage}
\usepackage{pgfplots}
    \pgfplotsset{compat=1.14}
\usepackage{tikz}
    \usetikzlibrary{shapes,decorations}
    \usetikzlibrary{positioning}
\usepackage{graphicx}
\usepackage{mathrsfs, amsthm, mathtools, amssymb}
\usepackage{algorithm2e}
\usepackage{enumitem}
\usepackage{authblk}
\usepackage{natbib}
\usepackage{nicefrac}
\usepackage{hyperref}
    
\usepackage[capitalize,noabbrev]{cleveref} 


\newcommand{\cD}{\mathcal{D}}

\newcommand{\cO}{\mathcal{O}}

\newcommand{\cQ}{\mathcal{Q}}

\newcommand{\cU}{\mathcal{U}}


\newcommand{\E}{\mathbb{E}}

\newcommand{\I}{\mathbb{I}}

\newcommand{\N}{\mathbb{N}}

\newcommand{\R}{\mathbb{R}}

\newcommand{\e}{\varepsilon}

\newcommand{\lrb}[1]{\left(#1\right)}
\newcommand{\brb}[1]{\bigl(#1\bigr)}
\newcommand{\Brb}[1]{\Bigl(#1\Bigr)}

\newcommand{\lsb}[1]{\left[#1\right]}
\newcommand{\bsb}[1]{\bigl[#1\bigr]}
\newcommand{\Bsb}[1]{\Bigl[#1\Bigr]}

\newcommand{\lcb}[1]{\left\{#1\right\}}
\newcommand{\bcb}[1]{\bigl\{#1\bigr\}}

\newcommand{\lce}[1]{\left\lceil#1\right\rceil}


\newcommand{\labs}[1]{\left\lvert#1\right\rvert}

\newcommand{\Babs}[1]{\Bigl\lvert#1\Bigr\rvert}


\newcommand{\bno}[1]{\bigl\lVert#1\bigr\rVert}



\DeclareMathOperator*{\argmax}{argmax}
\newcommand{\dif}{\,\mathrm{d}}

\newcommand{\lip}{Lipschitz}
\newcommand{\leb}{Lebesgue}
\newcommand{\leib}{Leibniz}
\newcommand{\BR}{\mathrm{BR}}
\newcommand{\fracc}[2]{#1/#2}

\newcommand{\iop}{\infty}
\newcommand{\mypapertitle}{Online Learning in Supply-Chain Games}

\newcommand{\stack}{Stackelberg}

\newcommand{\us}{u_{\mathrm{S}}}
\newcommand{\ur}{u_{\mathrm{R}}}

\newcommand{\piya}{Piyavskii--Shubert}
\newcommand{\expthree}{Exp3-VI}

\newcommand{\loss}{\ell}
\newcommand{\hloss}{\widehat \ell}
\newcommand{\dd}{\dif}

\newtheorem{assumption}{Assumption}

\newtheorem{theorem}{Theorem}
\newtheorem{lemma}{Lemma}
\newtheorem{proposition}{Proposition}

\begin{document}

\title{\mypapertitle
\texorpdfstring{\thanks{
Partially supported by the Italian MIUR PRIN 2017 Project ALGADIMAR Algorithms, Games, and Digital Markets.
Marco Scarsini is a member of INdAM-GNAMPA.}}{}
}

\author[1]{Nicol\`o Cesa-Bianchi}
\author[1,2]{Tommaso Cesari}
\author[3]{Takayuki Osogami}
\author[4]{Marco Scarsini}
\author[5]{Segev Wasserkrug}

\affil[1]{Universit\`a degli Studi di Milano, Milan, Italy}
\affil[2]{Toulouse School of Economics, Toulouse, France}
\affil[3]{IBM Research - Tokyo, Tokyo, Japan}
\affil[4]{Luiss University, Rome, Italy}
\affil[5]{IBM Research - Haifa, Haifa, Israel}

\maketitle

\begin{abstract}
We study a repeated game between a supplier and a retailer who want to maximize their respective profits without full knowledge of the problem parameters. After characterizing the uniqueness of the Stackelberg equilibrium of the stage game with complete information, we show that even with partial knowledge of the joint distribution of demand and production costs, natural learning dynamics guarantee convergence of the joint strategy profile of supplier and retailer to the Stackelberg equilibrium of the stage game. We also prove finite-time bounds on the supplier's regret and asymptotic bounds on the retailer's regret, where the specific rates depend on the type of knowledge preliminarily available to the players. In the special case when the supplier is not strategic (vertical integration), we prove optimal finite-time regret bounds on the retailer's regret (or, equivalently, the social welfare) when costs and demand are adversarially generated and the demand is censored.
\end{abstract}

\section{Introduction}
\label{se:intro}

The newsvendor problem is a central topic in inventory theory and, more generally, in the analysis of supply chains. In its classical version 
\cite{ArrHarMarE1951},
a retailer orders a certain quantity of a perishable good from a supplier.
The decision of how much to order is made before the realization of the unknown demand for the good.  
If all costs are linear, the optimal decision is a quantile of the demand distribution that depends on the parameters of the model (i.e., the wholesale price charged by the supplier and the retail market price).
Even in this simple framework, it is clear that the retailer can compute the optimal quantity only if the demand distribution and the model parameters are known.

In a practical newsvendor scenario, it is rarely the case that the retailer is the only decision maker. 
For instance, the wholesale price could be determined by a supplier incurring exogenous production costs that are unknown to the retailer.
These multi-agent versions of the newsvendor problem can be analyzed via a game-theoretic approach, where
the optimal choices are expressed in terms of equilibria of a game.
As it is unreasonable to assume that supplier and retailer have full knowledge of the distributions of production costs and demand,
it is important to find strategies that perform well even when only partial knowledge of these quantities is available to each player.

\paragraph{Our contribution.}
\label{suse:our-contribution}
In this paper we consider a competitive newsvendor model where players do not have full knowledge of the relevant parameters.
The problem is modeled as a repeated game between a supplier and a retailer, whose goal is to minimize their regret.

We start by considering the stage game with complete information, which we model as a Stackelberg game. Here, the supplier chooses a wholesale price and reveals it to the retailer, who in turn chooses a quantity to order by solving a newsvendor problem.
Under weak conditions (see \Cref{ass:distribution-stage}) on the joint distribution of production costs $C$, retail price $P$, and demand $D$, we characterize (\Cref{t:unique-stack}) the uniqueness of the Stackelberg equilibrium (SE) of the game, which we show to be in pure strategies.

We then consider a repeated game in which the supplier only knows the marginal distribution of $C$, and the retailer only knows the marginal of $(P,D)$. Assuming that, at each time $t$, the retailer chooses the quantity $q_t$ by best responding to the supplier's choice of wholesale price $w_t$, in \Cref{t:strong-assumptions}, we show that the supplier's average regret vanishes at rate $T^{-1/2}$, while the players' strategy profile $(w_t,q_t)$ converges to the SE of the stage game asymptotically at the same $T^{-1/2}$ rate.
If the supplier is given an upper bound on the Lipschitz constant of their own utility, then the supplier's average regret vanishes at a faster rate $(\ln T)/T$. 
Notably, in this case the rate of convergence to the SE of the players' strategy profile can be arbitrarily slow (\Cref{t:strong-assumptions+lip}). 
Finally, if in the previous setting we drop the assumption that the retailer knows $(P,D)$, then the supplier's regret vanishes at rate $T^{-1/3}$ provided the retailer best responds to $w_t$ based on the empirical distribution of past realizations $(P_s,D_s)$ for all $s < t$ (\Cref{th:etc-ftl}).

The last part of this work deals with vertical integration, a setting in which the retailer is the only decision-maker because the supplier always sells at production cost. This can happen for a number of reasons; for instance, because the supply chain is owned by the same company, or because both parties preliminarily agreed on a revenue-sharing contract. In \Cref{th:buyer} we prove finite-time regret bounds for adversarial (as opposed to stochastic) sequences of production costs and demand functions. Our result applies to a setting in which the retailer is simultaneously optimizing the order quantity and the retail price in the presence of censored feedback. Our bounds show that the average regret vanishes at rate $T^{-1/3}$ (ignoring logarithmic factors). As this setting includes posted-price auctions as a special case, this rate is not significantly improvable.

Our analyses of convergence of the learning dynamics to the SE borrow ideas from online learning and bandits (e.g., Follow-The-Leader and Explore-Then-Commit), and use zeroth-order optimization (e.g., the Piyavskii–Shubert algorithm). The adversarial analysis in the vertical integration setting builds on the techniques developed in \cite{cesa2017algorithmic}, which in turn are an adaptation of the general framework introduced by Alon et al.~\cite{alon2017nonstochastic}.

\paragraph{Related work.}
\label{suse:related-literature}
The newsvendor problem, also known as the newsboy problem, goes back to Edgeworth \cite{Edg:JRSS1888}.
The formalization used in this work is due to Arrow et al.~\cite{ArrHarMarE1951}; we refer the reader to the handbook \cite{Cho:HNP2012} for a survey of the many variants of Arrow's model.

A game-theoretic formulation of the newsvendor problem with competing retailers is proposed by Parlar~\cite{Par:NRL1988}---see also \cite{LipMcC:OR1997,Mahvan:OR2001,NetRudWan:IIET2006}.
Wang and Gerchak \cite{WanGer:MSOM2003} use a Stackelberg game to model a situation where an assembler has to buy components from different suppliers.
Lariviere and Porteus \cite{LarPor:MSOM2001} study a model where a supplier and a retailer interact through a price-only contract, and compare its efficiency with the efficiency of an integrated system. 
Adida and DeMiguel~\cite{AdiDeM:OR2011} consider a competitive inventory model with several suppliers and several retailers, and prove equilibrium uniqueness under some symmetry conditions.
We refer the reader to Cachon and Netessine~\cite{CacNet:INFORMS2006} for a survey of the literature on game-theoretic models in supply chain analysis.

The problem of learning equilibria is investigated by Balcan et al.~\cite{balcan2015commitment} for Stackelberg security games. They prove bounds on the leader's regret when the follower has a type that changes over time in a known and finite class. Their results hold in both the full information setting (where the leader can observe the type of the follower) and in the bandit setting (where the follower's type is not observed).
Sessa et al.~\cite{sessa2020learning} study general repeated games between a leader with a finite number of actions and an follower with a finite number of types which may adversarialy change over time. They prove bounds on the leader's regret in the full information setting when the utility of each follower (which is determined by its type) is only known to satisfy certain regularity assumptions.
Bai et al.~\cite{bai2021sample} show that in the bandit setting with finitely many actions for leader and follower, there exist expected utility functions such that any leader's algorithm suffers non-vanishing regret with probability at least $1/3$. They also show leader algorithms that converge to SE up to a certain suboptimality gap.
Deng et al.~\cite{deng2019strategizing} prove some interesting non-constructive results.
Let $V$ be the utility of the leader in a SE. Under some mild assumptions, they show that for any $\e > 0$ the leader can always obtain a utility of at least $(V - \e)T - o(T)$ in $T$ rounds, against any no-regret algorithm of the follower (note that the convergence rate is not explicit in their results).
Mansour et al.~\cite{mansour2022strategizing} extend these results to Bayesian games.

Note that our results take advantage of the specific structure of the utility functions to obtain good rates for the leader's regret in a bandit setting. Note also that, unlike previous works, the follower's best response in our setting is not determined by a type, but rather learned from observed data. This allows us to prove that the follower's regret vanishes too. As a consequence, we are also able to prove convergence to SE of the players' strategy profile.

Often, quantities and prices between suppliers and retailers are determined by contracts.
Cachon and Lariviere \cite{CacLar:MS2005} study revenue-sharing contracts in general supply chain models, and show that such contracts result in the same overall efficiency as vertical integration. Note that under vertical integration the retailer is the only decision-maker, and our setting becomes similar to a single-agent newsvendor model.
Among the many works investigating learning approaches to the newsvendor model, the ones most relevant to our work are the regret minimization analyses of Huh and Rusmevichientong~\cite{huh2009nonparametric} and of Besbes and Muharremoglu~\cite{besbes2013implications}. However, to the best of our knowledge, the adversarial newsvendor problem with censored demand is only studied by Lugosi et al.~\cite{lugosi2021hardness}. The reason why their regret rates are better than ours is a different choice of the retailer's decision space. In their setting, the retailer only chooses the order quantity (while we also have control on the retail price, which affects the demand). Moreover, their decision space is finite while ours is continuous.

\section{Stage game and unique \stack{} equilibrium}
\label{s:stage-full-section}

In this section, we analyze a (one-shot) stage game of supply chain and characterize the uniqueness of its SE (formally  defined below). We also provide some insights on the learning results proven in the following sections.

\paragraph{The stage game.}
\label{s:stage}
An instance of the stage game is characterized by a known distribution $\cD$ on $[0,\iop)^3$ that governs the (possibly correlated) \emph{production cost} $C$ of the supplier, the \emph{retail price} $P$ dictated by the market, and the \emph{demand} $D$. 
We make the following assumption on $\cD$.
\begin{assumption}
\label{ass:distribution-stage}
The distribution $\cD$ of $(C,P,D) \in [0,\infty)^3$ satisfies the following:
\begin{enumerate}
    \item \label{i:well-def} $\E[C]$, $\E[P]$, $\E[D]$, and $\E[PD]$ are all finite.
    \item \label{i:econ} $\E[C] < \E[P]$.
    \item \label{i:tech} The conditional distribution of $D$ given $(C,P)$ admits a density (w.r.t.\ the \leb{} measure) such that $f( d \mid c,p ) > 0$, for all $(c,p,d)\in [0,\iop)^3$.
\end{enumerate}
\end{assumption}
\Cref{i:well-def} guarantees that the expected utilities of the supplier and the retailer (see below for a definition) are well-defined and finite for any action profile. \Cref{i:econ} is an economic assumption stating that, on average, the supplier's cost is lower than the retail price, thus eliminating trivial scenarios. \Cref{i:tech} is a mild technical condition that simplifies the presentation of the proof of \Cref{t:unique-stack}.

We denote the conditional cumulative distribution function and survival function of the demand, given the supplier's cost and retail price, by
\[
    F(d\mid c,p) \coloneqq \int_0^d f( x \mid c,p ) \dif x
    \quad \text{and} \quad
    \bar F (d\mid c,p) \coloneqq 1 - F (d\mid c,p)
    \quad
    \forall c,p,d\ge 0.
\]
In this section, we assume that the structure of the model (namely, $\cD$ in \Cref{ass:distribution-stage}) is common knowledge to both players.
The game proceeds as follows.
First, the supplier (S) selects a wholesale price $w \in [0,\iop)$ and reveals it to the retailer. 
Then, the retailer (R) selects a quantity $q\in[0,\iop)$.
Their expected \emph{utilities} are respectively defined, for any $(w,q) \in [0,\iop)^2$, by
\[
    \us(w,q) \coloneqq qw - q \E\bsb{ C }
\quad\text{and}\quad
    \ur(w,q) \coloneqq \E\bsb{ \min\{q,D\} P } - qw. 
\]
where the expectations are with respect to $(C,P,D)\sim\cD$.

Finally, the \emph{\stack{} equilibria} of this game are defined as strategy pairs $(w^\star, q^\star) \in [0,\iop)^2$ such that
\[
    w^\star \in \argmax_{w \in [0,\iop) } \us \brb{ w, \BR(w) }
\quad \text{and} \quad
    q^\star = \BR(w^\star).
\]
where $\BR$ is a best-response of the retailer, i.e., $\BR(w) \in \argmax_{q \in [0,\iop) } \ur(w,q)$.

\paragraph{Uniqueness of the \stack{} equilibrium.}
We can now state our characterization of the uniqueness of the SE in the stage game\footnote{For example, the SE is unique when $D$ has a Weibull distribution with nondecreasing failure rate and $(C,P)$ is deterministic (see \Cref{p:unique} in the appendix).}.

\begin{theorem}
\label{t:unique-stack}
Under \Cref{ass:distribution-stage}, let $g(w)\coloneqq h^{-1}(w)$ be the inverse of 
    \begin{align}
    \label{eq:hx}
        h(x) 
    \coloneqq
        \E \bsb{ P \bar  F(x \mid C,P) }.
    \end{align}
 Then the following conditions are equivalent:
\begin{enumerate}
    \item The stage game in \Cref{s:stage} admits a unique SE $\brb{ w^\star, \, g(w^\star) }$.
    \item \label{i:condition-unique-stack} $\{w^{\star}\} \equiv \argmax_{w\in A} g(w)\brb{w-\E[C]}$, where
    \[
        A 
    \coloneqq 
        \lcb{ w \in \brb{\E[C],\E[P]} : -\frac{g'(w)}{g(w)}
    =
        \frac{1}{w-\E[C]} }.
    \]
\end{enumerate}
\end{theorem}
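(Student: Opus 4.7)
The plan is to reduce the Stackelberg search to a one-dimensional optimization by first deriving the retailer's unique best response $g(w)$, then analyzing the supplier's reduced objective $\Pi(w) \coloneqq g(w)(w - \E[C])$ and locating its maximizers via first-order conditions.

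First I would handle the retailer. Writing $\min\{q,D\} = \int_0^q \mathbf{1}\{D > x\} \dif x$ and applying Tonelli yields $\E\bsb{P \min\{q,D\}} = \int_0^q h(x) \dif x$, so $\partial_q \ur(w,q) = h(q) - w$. By \Cref{i:tech} the conditional survival function $\bar F(\cdot \mid c, p)$ is strictly decreasing for every $(c,p)$, so $h$ is strictly decreasing and continuous on $[0,\iop)$, with $h(0) = \E[P]$ (no atom of $D$ at $0$) and $\lim_{x\to\iop} h(x) = 0$ by dominated convergence (dominator $P$, integrable by \Cref{i:well-def}). Hence $h$ is a bijection $[0,\iop) \to (0, \E[P]]$, $\ur(w, \cdot)$ is strictly concave, and the unique best response is $\BR(w) = g(w)$ for $w \in (0, \E[P])$ and $\BR(w) = 0$ for $w \ge \E[P]$.

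Next I would analyze $\Pi(w) = \us\brb{w, \BR(w)}$. This function is nonpositive on $[0, \E[C]]$, zero on $[\E[P], \iop)$, and strictly positive on $(\E[C], \E[P])$ by \Cref{i:econ}. To conclude that the maximum is attained in this open interval, I would show that $\Pi$ extends continuously to $0$ at both endpoints: at $\E[P]$ from $g(\E[P]) = 0$, and at $\E[C]$ trivially when $\E[C] > 0$ (bounded $g$ times vanishing linear factor). The delicate case is $\E[C] = 0$, where I would show $w g(w) \to 0$ as $w \to 0^+$ by the substitution $x = g(w)$ and the bound $x h(x) = \E\bsb{P \cdot x \bar F(x \mid C, P)} \le \E\bsb{P \cdot \E[D \mathbf{1}\{D > x\} \mid C, P]} \to 0$, justified by dominated convergence with dominator $PD$ (integrable by \Cref{i:well-def}). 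Being continuous and positive on the closed interval, $\Pi$ therefore attains its maximum at an interior critical point.

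Finally I would identify the critical points. By the inverse function theorem, $g$ is differentiable on $(0, \E[P])$ and $\Pi'(w) = g'(w)(w - \E[C]) + g(w)$; since $g(w) > 0$, the vanishing of $\Pi'$ on $(\E[C], \E[P])$ is equivalent to the defining equation of $A$. Since every interior maximizer of $\Pi$ is a critical point, $\argmax_{w \in (\E[C], \E[P])} \Pi(w) = \argmax_{w \in A} \Pi(w)$. Combined with the bijection $w \leftrightarrow (w, g(w))$ between the supplier's maximizers and the SE profiles (from uniqueness of $\BR$), this yields the equivalence of \emph{(1)} and \emph{(2)}. The main technical hurdles are justifying differentiation under the expectation defining $h'$ (Leibniz with dominator $P$) and the endpoint analysis when $\E[C] = 0$; both reduce to the integrability conditions in \Cref{i:well-def}.
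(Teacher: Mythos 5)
Your proposal is correct and follows the same backward-induction route as the paper: establish the retailer's unique best response $\BR(w)=g(w)$ from the strict monotonicity of $h$, reduce the supplier's problem to maximizing $g(w)\brb{w-\E[C]}$ on $\brb{\E[C],\E[P]}$, and identify $A$ with the stationary points so that uniqueness of the global maximizer is equivalent to $\argmax_{w\in A}$ being a singleton. The one place you go beyond the paper's argument is the endpoint analysis---in particular the $\E[C]=0$ case, where you use $\E[PD]<\infty$ to show $w\,g(w)\to 0$---which guarantees that the supremum is actually attained at an interior critical point; the paper leaves this attainment step implicit even though it is needed for the equivalence as stated.
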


To prove the theorem, we begin with a simple but key lemma, whose proof we defer to \Cref{s:proof-lemma-g}.
\begin{lemma}
    \label{l:proprerties-of-g-2}
    Under \Cref{ass:distribution-stage}, the function $h$ in \Cref{eq:hx}
    is differentiable on $x\in(0,\iop)$ and has a strictly negative derivative. and therefore
    is invertible, and $h^{-1}(w)$ is differentiable and strictly decreasing on $w\in\brb{ 0, \, \E[P] }$.
\end{lemma}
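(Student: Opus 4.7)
The plan is to write $h$ as the integral of an explicitly identifiable function, from which absolute continuity, the formula for $h'$, and its sign all follow; I would then transfer these properties to $g=h^{-1}$ via the inverse function theorem for monotone functions.

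First, for $0\le x<y$ I would plug the identity $\bar F(x\mid c,p)-\bar F(y\mid c,p)=\int_x^y f(d\mid c,p)\,\dd d$ into the definition \Cref{eq:hx} and apply Fubini--Tonelli (licensed by non-negativity of $Pf$):
\[
    h(x)-h(y)=\E\bsb{P\int_x^y f(d\mid C,P)\,\dd d}=\int_x^y\phi(d)\,\dd d,\qquad\phi(d)\coloneqq\E\bsb{Pf(d\mid C,P)}.
\]
Three observations flow from this identity: $h$ is absolutely continuous on $[0,\iop)$; $\phi$ is integrable on $[0,\iop)$ with $\int_0^\iop\phi(d)\,\dd d=h(0)-\lim_{y\to\iop}h(y)=\E[P]$ (using $\bar F(0\mid\cdot)=1$ and dominated convergence with envelope $P$ for the limit at infinity); and by Lebesgue's differentiation theorem $h'(x)=-\phi(x)$ at almost every $x$. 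Positivity of $f$ on all of $[0,\iop)^3$ (Assumption~\ref{i:tech}) combined with $\Pb(P>0)>0$---forced by $\E[P]>\E[C]\ge0$ in Assumption~\ref{i:econ}---gives $\phi(d)>0$ for every $d>0$, so $h'<0$ almost everywhere and $h$ is strictly decreasing on $[0,\iop)$.

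Since $h$ is continuous, strictly decreasing, $h(0)=\E[P]$, and $h(x)\to0$ as $x\to\iop$, it is a continuous bijection from $[0,\iop)$ onto $(0,\E[P]]$, so $g=h^{-1}$ exists and is strictly decreasing. I would finish by invoking the monotone version of the inverse function theorem: if $h$ is strictly monotone and differentiable at $x_0$ with $h'(x_0)\ne 0$, then $g$ is differentiable at $h(x_0)$ with $g'(h(x_0))=1/h'(x_0)$. This yields $g'(w)=-1/\phi(g(w))<0$ wherever $h$ is differentiable.

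\textbf{Main obstacle.} The subtle step is upgrading differentiability of $h$ from almost-every $x$ to \emph{every} $x\in(0,\iop)$ as the statement demands. The plan is to strengthen the argument above by applying dominated convergence directly to the difference quotient $\epsilon^{-1}\brb{h(x_0+\epsilon)-h(x_0)}=-\epsilon^{-1}\int_{x_0}^{x_0+\epsilon}\phi(d)\,\dd d$ at a fixed $x_0>0$, which reduces to showing $\phi$ is continuous on $(0,\iop)$; this step leans on the pointwise positivity and implicit joint regularity of $f$ built into Assumption~\ref{i:tech}, with $P$ serving as an integrable envelope. Since the ratio $g'(w)/g(w)$ appearing in \Cref{t:unique-stack} is evaluated pointwise, this refinement is where the lemma earns its keep.
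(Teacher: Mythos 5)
Your route differs from the paper's. The paper differentiates under the expectation directly: by the Leibniz integral rule, $h'(x) = -\E\bsb{P f(x \mid C,P)}$, which is strictly negative because $f>0$ everywhere and $\Pb(P>0)>0$; invertibility and the properties of $h^{-1}$ then follow from the inverse function theorem. You instead integrate: you represent $h(x)-h(y)$ as $\int_x^y \phi$ with $\phi(d)=\E\bsb{Pf(d\mid C,P)}$ via Tonelli, deduce absolute continuity and $h'=-\phi$ a.e.\ from Lebesgue differentiation, and get strict monotonicity from $\phi>0$. Everything up to that point is rigorous and in fact sharper than the paper's one-line appeal to Leibniz: it delivers strict decrease, the exact range $(0,\E[P]]$, and differentiability of $g=h^{-1}$ wherever $h$ is differentiable, all without any smoothness hypotheses on $f$.

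The step you flag as the ``main obstacle'' is, however, where your argument (and, implicitly, the paper's) is incomplete. Upgrading $h'=-\phi$ from almost every to \emph{every} $x\in(0,\iop)$ requires continuity of $\phi$, and you propose to extract this from the ``implicit joint regularity of $f$ built into'' \Cref{i:tech} of \Cref{ass:distribution-stage}. But that item only asserts existence and pointwise positivity of the conditional density; a density can be everywhere positive yet discontinuous in $d$ (e.g.\ with a jump at $d=1$ uniformly over $(c,p)$ and $P$ deterministic), in which case $h$ fails to be differentiable at that point and the lemma as literally stated is false. There is no regularity to lean on, so dominated convergence applied to the difference quotient cannot close this step. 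To be fair, the paper's own proof buries the same issue inside ``the Leibniz integral rule,'' whose hypotheses likewise require continuity of the integrand; your a.e.\ version is the honest content of the lemma under the stated assumption, and it suffices for invertibility and strict monotonicity, though not for the pointwise evaluation of $g'(w)/g(w)$ used in \Cref{t:unique-stack}. If you want the statement as written, you should add continuity of $d\mapsto f(d\mid c,p)$ (or directly of $\phi$) to the hypotheses rather than claim it follows from them.
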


We are now ready to prove \Cref{t:unique-stack}.
\begin{proof}[Proof of \Cref{t:unique-stack}]
We prove the theorem by backward induction.
First, we show that under \Cref{ass:distribution-stage}, the retailer has a unique best response.
Then, we show that given that the retailer best responds, the supplier has a unique optimal move $w^\star$ if and only if the condition of the theorem holds.

\noindent\textbf{Retailer's move.} 
Fix an arbitrary wholesale price $w\ge 0$. The retailer's utility is, for any $q\ge 0$,
\begin{align*}
    \ur(w,q) 
&=
    \E \lsb{ P \, \E\lsb{ \int_0^q x f(x \mid C,P) \dif x + q \brb{ 1 - F(q \mid C,P) } \mid C,P } } - qw.
\end{align*}
To maximize it, we compute its derivative, which is justified by \Cref{ass:distribution-stage} combined with the \leib{} integral rule. 
For any $q>0$, we obtain
\begin{equation}
\label{e:derivative-of-retailer-objective}
    \frac{\partial}{\partial q} \ur(w,q) 
=
    \E \Bsb{ P \, \E \lsb{ \bar F(q \mid C,P) \mid C,P} } - w,
\end{equation}
which is non-negative if and only if
$
    \E\bsb{ P \bar{F}(q \mid C,P) } \ge w
$.
By the arbitrariness of $w$ and \Cref{l:proprerties-of-g-2}, we can conclude that, for any choice of the wholesale price $w > 0$, there exists a unique maximizer $q^\star_w = \BR(w)$ of $q \mapsto \ur(w,q)$ where
\begin{equation}
\label{e:best-response}
    \BR(w) \coloneqq
    \begin{cases}
    g(w) & \text{if $w < \E[P]$}, \\
    0 & \text{if $w \ge \E[P]$}. \\
    \end{cases}
\end{equation}
\textbf{Supplier's move.}
Given the retailer's best response $q^\star_w = \BR(w)$, the supplier's utility is, for
any $w > 0$,
\begin{equation}
\label{e:zeroth-order-supplier}
    \us(w,q^\star_w)
=
    q^\star_w\brb{ w-\E[C] }.
\end{equation}
Since $q_w^\star=0$ for all $w\ge\E[P]$ and $w-\E[C]\le 0$ for
all $w\le\E[C]$, to maximize the supplier's expected utility, we
can restrict our search to $w\in\brb{\E[C],\,\E[P]}$, where
$
    \us(w,q^\star_w)
=
    g(w) \brb{ w-\E[C] }
$
is strictly positive and differentiable.
To find the maximum, then, we can study the sign of the derivative of the supplier's
expected utility, obtaining, for all $w\in\brb{\E[C], \, \E[P]}$,
\begin{equation}
    \label{e:derivative-of-supllier-objective}
    \frac{\partial}{\partial w} \us(w,q_w^\star)
=
    g'(w) \brb{w-\E[C]} + g(w).
\end{equation}
Thus, the set $A$ in the statement of the theorem is the set of all stationary points of $\us(w,q_w^\star)$ and the condition that $\argmax_{w\in A} g(w)\brb{w-\E[C]}$ is a singleton is exactly stating that there exists a unique maximizer of $g(w)\brb{w-\E[C]} = \us(w,q_w^\star)$, which coincides with the existence of a unique SE.
\end{proof}
In general, the payoffs under SE may be unique under weaker conditions than the ones for the uniqueness of the SE, but such weaker conditions do not appear to be simply stated in our case.

We also note that efficiency of the Stackeleberg equilibrium can be measured using the price of anarchy. 
Some preliminary results can be found in \Cref{s:inefficiency}.

\section{Learning the Stackelberg equilibrium}
\label{s:online-learning}

A limitation of \Cref{t:unique-stack} is that, even when a stage game has a unique SE, in order to compute it, both players need to know the underlying distribution $\cD$.
In this section, we show how to circumvent this issue and achieve convergence to the unique SE without relying on the knowledge of $\cD$. We do this by reconstructing the salient features of $\cD$ through a learning technique in a repeated game.
An instance of the repeated game is characterized by a distribution $\cD$ on\footnote{In contrast to \Cref{s:stage-full-section}, we assume here that $\cD$ is bounded (without loss of generality, by $1$). This assumption is for simplifying the presentation; all the following results can be extended to the unbounded case simply by assuming subgaussianity.} $[0,1]^3$, that governs the (possibly correlated) production cost, retail price, and demand.

We study the following online protocol.
At each round $t=1,2,\dots$:
\begin{enumerate}
\item Nature draws $(C_t,P_t,D_t)$ i.i.d.\ according to $\cD$.
\item The supplier (S) selects a wholesale price $W_t\in[0,1]$ and reveals it to the retailer (R).
\item The production cost $C_t$ is revealed to S.
\item R buys a quantity $Q_t \in [0,1]$ paying $Q_t W_t$ to S.
\item The retail price $P_t$ and the demand $D_t$ are revealed to R.
\item The market buys a quantity $\min\{Q_t,D_t\}$, paying $\min\{Q_t,D_t\}P_t$ to R.
\end{enumerate}

The individual goals of the supplier and the retailer are to maximize, for any time horizon $T$, their individual long-term expected utilities 
\[
	\sum_{t=1}^T \E \bsb{ \sigma(W_t;Q_t,C_t)}
\quad\text{and}\quad
	\sum_{t=1}^T \E \bsb{ \rho(Q_t;W_t,P_t,D_t) \mid W_t},
\]
where we define
$
	\sigma(w;q,c)
\coloneqq
	qw - qc
$
and
$
	\rho(q;w,p,d)
\coloneqq
	\min\{q,d\} p - qw
$
for $(w,q,c) \in [0,1]^3$ and $(q,w,p,d) \in [0,1]^4$.
The conditional expected utility of the retailer is to be maximized with high probability with respect to $(W
_t)_{t\in\N}$.
The asymmetry in the objectives of the supplier (S) and retailer (R) is due to the fact that R is revealed $W_t$ before making a decision at time $t$, while S has to act before observing $Q_t$.

\paragraph{Convergence to SE with partial information.}
\label{s:partial-info}
In \Cref{s:stage-full-section}, we showed that the stage game admits a unique SE under the assumptions in \Cref{t:unique-stack}, and one can obtain the SE if $\cD$ is known by both the supplier (S) and the retailer (R).
We now show that a repeated interaction between S and R who act rationally and selfishly can lead to a convergence to the SE, even assuming that S and R have only partial information on $\cD$.
In \Cref{s:convergence-no-info}, we will show that the same result can be obtained when S and R have essentially no information on $\cD$ at a cost of a slower convergence rate.
We make the following assumption.
\begin{assumption}
\label{ass:distribution-learning-strong}
The distribution $\cD$ of $(C,P,D) \in [0,1]^3$ satisfies the following:
\begin{enumerate}
    \item \label{i:econ-learning-strong} $\E[C] < \E[P]$.
    \item \label{i:econ-2-learning-strong} $D$ and $C$ are conditionally independent, given $P$.
    \item \label{i:tech-learning-strong} The conditional distribution of $D$ given $P$ admits a density (with respect to the \leb{} measure) such that $f( d \mid p ) > L$, for some $L>0$ and all $(p,d)\in [0,1]^2$.
    \item Condition~\ref{i:condition-unique-stack} of \Cref{t:unique-stack} holds (i.e., the SE is unique).
\end{enumerate}
\end{assumption}
Condition~\ref{i:econ-learning-strong} states that, on average, the supplier's cost is lower than the retail price, eliminating trivial scenarios. Condition~\ref{i:econ-2-learning-strong} states that the demand $D$ may depend on the supplier's cost $C$ only via the retail price $P$.
Condition~\ref{i:tech} is a mild technical condition guaranteeing that the learning problem is at least \lip{} (see below).
Essentially \Cref{ass:distribution-learning-strong} implies that the conditions corresponding to \Cref{ass:distribution-stage} are satisfied for $\cD$ with bounded support and guarantees that \Cref{t:unique-stack} can be applied, so that the learning problem is tractable.

The partial knowledge of the supplier and the retailer is formalized as follows.
\begin{assumption}
\label{ass:limited-info}
The supplier has access to the marginal distribution of $C$, but not to that of $(P,D)$; the retailer has access to the marginal of $(P,D)$, but not to that of $C$.
\end{assumption}
Namely, the supplier has more information on the production cost, while the retailer has more information about the direct interaction with the market.

The retailer's strategy is to best-respond to any wholesale price. Note that $q^\star_{w} = \BR(w)$ can be computed exactly via \Cref{e:best-response} thanks to Assumptions~\ref{ass:distribution-learning-strong}~and~\ref{ass:limited-info}.

Under these assumptions, and given that the retailer best-responds, the supplier can compute their expected cost $\E[C]$ and is left with solving a zeroth-order \lip{} optimization problem (i.e., maximizing \Cref{e:zeroth-order-supplier}) without the knowledge of the \lip{} constant of its objective.
This can be done with a simple explore-then-commit algorithm (\Cref{a:etc}), which we assume to be the supplier's strategy.

\begin{algorithm}
\DontPrintSemicolon
\caption{Explore-Then-Commit\label{a:etc}}
\textbf{input:} Time horizon $T$\;
\For{$t=1,\dots,\lfloor T^{1/2} \rfloor$}
{
    Select the wholesale price $w_t \coloneqq \fracc{t}{\big(\lfloor T^{1/2} \rfloor + 1\big)}$\;
    Observe the quantity $q_t$\;
}
\For{$t=\lfloor T^{1/2} \rfloor+1,\dots,T$}
{
    Select a wholesale price $w_t=w_{s^\star}$, where ${\displaystyle s^\star \in \argmax_{s=1,\ldots,\lfloor T^{1/2} \rfloor} q_s \brb{ w_s - \E[C] } }$\;
    Observe the quantity $q_t$\;
}
\end{algorithm}

\begin{theorem}
\label{t:strong-assumptions}
Under Assumptions~\ref{ass:distribution-learning-strong}~and~\ref{ass:limited-info}, for any horizon $T$, if the supplier runs \Cref{a:etc} with input $T$ and the retailer best responds, then
\begin{equation}
\label{e:regr-supp-1l} 
        \E \bsb{ \sigma(w^\star;q^\star,C) } - \frac1T \sum_{t=1}^T \E \bsb{ \sigma(w_t;q_t,C_t)} 
    \le 
        \left( \frac{1-\E[C]}{\E[P]L} +2 \right) T^{-1/2},
\end{equation}
where $(w^\star,q^\star)$ is the unique SE of the stage game.
Also, for all sufficiently large $T$:
\begin{enumerate}
    \item \label{i:regr-ret-1} 
    $
        \E \bsb{ \rho(q^\star;w^\star,P,D) } - \frac1T \sum_{t=1}^T \E \bsb{ \rho(q_t;w_t,P_t,D_t) }
    \le 
        \brb{ L^{-1} + 2 }T^{-1/2}.
    $
    \item \label{i:last-iterate-conv-1} 
    $
        \bno{ (w^\star,q^\star) - (w_T,q_T) }_1 
    \le 
        \brb{(\E[P] L)^{-1}+1} T^{-1/2}.
    $
\end{enumerate}
\end{theorem}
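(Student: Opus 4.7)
My plan is to reduce all three bounds to a zeroth-order analysis of the objective $U(w) := g(w)\brb{w-\E[C]}$ on $[0,1]$, which equals the supplier's per-round expected utility once the retailer best-responds. Since $q_t = \BR(w_t) = g(w_t)$ by \eqref{e:best-response} and the retailer knows the marginal of $(P,D)$ by \Cref{ass:limited-info}, this reduction is exact, and the retailer's conditional expected utility equals $V(w_t) := \ur(w_t, g(w_t))$. I would then collect three Lipschitz estimates. Using $D \perp C \mid P$ from \Cref{ass:distribution-learning-strong} and $f(d \mid p) > L$, one gets $|h'(x)| = \E\bsb{P\,f(x \mid P)} \ge L\,\E[P]$, and \Cref{l:proprerties-of-g-2} then yields $|g'(w)| \le 1/(L\E[P])$. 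Combined with $g(w) \le 1$ and $|w-\E[C]| \le 1-\E[C]$ on $[\E[C],1]$, this gives $|U'(w)| \le K := (1-\E[C])/(L\E[P]) + 1$ on that interval. Finally, by the envelope theorem at the interior maximizer $g(w)$ of $\ur(w,\cdot)$, $V'(w) = -g(w) \in [-1,0]$, so $V$ is $1$-Lipschitz.

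The supplier bound~\eqref{e:regr-supp-1l} then follows from a standard ETC decomposition with $N := \lfloor T^{1/2} \rfloor$. Exploration contributes at most $N/T \le T^{-1/2}$, since the range of $U$ bounds the per-round regret by $1$. During exploitation, the grid point $w_{i^\star}$ closest to $w^\star$ satisfies $|w_{i^\star} - w^\star| \le 1/(N+1)$, so $U(w_{i^\star}) \ge U(w^\star) - K/(N+1)$ by the Lipschitz bound; since $q_s = g(w_s)$ is noise-free, the index $s^\star$ in \Cref{a:etc} exactly maximizes $U(w_s)$ on the grid, so $U(w_{s^\star}) \ge U(w_{i^\star})$, contributing at most $K/(N+1) \le K T^{-1/2}$ per round. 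Summing gives $(K+1) T^{-1/2} = \brb{(1-\E[C])/(L\E[P]) + 2} T^{-1/2}$.

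For items~\ref{i:regr-ret-1} and~\ref{i:last-iterate-conv-1} I need to upgrade the $K/(N+1)$ utility-space suboptimality of $w_{s^\star}$ to the parameter-space bound $|w_{s^\star} - w^\star| \le 1/(N+1)$, which is where ``sufficiently large $T$'' enters. By \Cref{ass:distribution-learning-strong}(4), $U$ is continuous on the compact $[0,1]$ with unique maximizer $w^\star$, so $\eta(\delta) := \min_{|w-w^\star| \ge \delta} \brb{U(w^\star) - U(w)} > 0$ for every $\delta > 0$. Because $U$ is $C^1$ on $(0, \E[P])$ and $w^\star$ is the unique global maximizer, $w^\star$ is also the unique critical point of $U$ in some small interval $[w^\star - \delta_0, w^\star + \delta_0]$, so $U$ is non-decreasing on $[w^\star - \delta_0, w^\star]$ and non-increasing on $[w^\star, w^\star + \delta_0]$. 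For $T$ large enough that $K/(N+1) < \eta(\delta_0)$ and $1/(N+1) < \delta_0$, the bound $U(w_{s^\star}) \ge U(w^\star) - K/(N+1) > U(w^\star) - \eta(\delta_0)$ forces $w_{s^\star} \in (w^\star - \delta_0, w^\star + \delta_0)$, and local unimodality then makes $w_{s^\star}$ the grid point closest to $w^\star$, giving $|w_T - w^\star| \le 1/(N+1) \le T^{-1/2}$.

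The last-iterate bound then follows from Lipschitzness of $g$: $|q_T - q^\star| = |g(w_T) - g(w^\star)| \le (L\E[P])^{-1} T^{-1/2}$, so $\bno{(w^\star,q^\star) - (w_T,q_T)}_1 \le \brb{(L\E[P])^{-1} + 1} T^{-1/2}$. For the retailer bound, exploration contributes at most $N/T \le T^{-1/2}$ via the boundedness of $V$, while exploitation contributes at most $|V(w^\star) - V(w_T)| \le |w^\star - w_T| \le T^{-1/2}$ via the $1$-Lipschitz property of $V$, totalling $2 T^{-1/2} \le (L^{-1} + 2) T^{-1/2}$. The main obstacle is the closest-grid-point step: uniqueness of $w^\star$ does not mechanically imply the local unimodality required to conclude that $w_{s^\star}$ is the closest grid point, and one must verify carefully that the $C^1$-smoothness of $U$ combined with unique global maximality forces $w^\star$ to be isolated among critical points in some neighborhood.
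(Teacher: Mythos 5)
Your proposal follows the paper's proof almost step for step: the same Lipschitz constant $M=\frac{1-\E[C]}{\E[P]L}+1$ for the supplier's objective $w\mapsto g(w)\brb{w-\E[C]}$ obtained from \eqref{e:derivative-of-supllier-objective} and \eqref{e:derivative-of-h}, the same noise-free explore-then-commit decomposition for \eqref{e:regr-supp-1l}, and the same reduction of items~\ref{i:regr-ret-1} and~\ref{i:last-iterate-conv-1} to the parameter-space bound $\labs{w_T-w^\star}\le T^{-1/2}$ via Lipschitzness of $\BR$ and of the retailer's value function. Your derivation of \eqref{e:regr-supp-1l} is complete and correct, and your envelope-theorem observation that the value function $V(w)=\ur\brb{w,g(w)}$ has derivative $-g(w)\in[-1,0]$ is a clean tightening of the paper's chain-rule constant $L^{-1}+1$.

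The one genuine issue is exactly the step you flag at the end. Your proposed repair --- that $C^1$-smoothness plus uniqueness of the global maximizer forces $w^\star$ to be an isolated critical point, hence local unimodality, hence that the utility-maximizing grid point lies within one grid step of $w^\star$ --- is false in general: a function such as $w\mapsto -\brb{w-w^\star}^4\brb{2+\sin\brb{1/(w-w^\star)}}$ is $C^1$ with a unique global maximizer yet has critical points accumulating at $w^\star$, and one can likewise build Lipschitz functions with a unique maximizer for which the best grid point sits several grid steps away from $w^\star$; the stated assumptions do not rule such behavior out. What uniqueness, continuity, and compactness do give you (via your $\eta(\delta)$ argument) is $w_{s^\star}\to w^\star$, which yields asymptotic versions of items~\ref{i:regr-ret-1} and~\ref{i:last-iterate-conv-1} but not the stated constants. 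Be aware, though, that the paper's own proof does not resolve this either: it passes from ``S selects the best point in a grid of step-size at most $T^{-1/2}$'' directly to $\labs{w_T-w^\star}\le T^{-1/2}$, which is precisely the inference in question. So you have correctly located the weakest link of the existing argument rather than introduced a new gap; closing it rigorously would require an extra local unimodality or growth condition on $U$ near $w^\star$, or accepting a constant depending on the modulus of concentration of $U$ around its maximizer.
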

Item~\ref{i:last-iterate-conv-1} shows \emph{last-iterate} convergence to the unique SE (in contrast to the weaker \emph{time-average} convergence that is typically obtained in regret minimization).
\Cref{e:regr-supp-1l} and \Cref{i:regr-ret-1} give regret guarantees for both the supplier and the retailer.
This shows that not only the utilities of both the supplier and the retailer converge to that of the SE, but that both their cumulative utilities match (up to lower-order terms) those that would be gathered by consistently selecting the SE $(w^\star,q^\star)$ at all time steps with full knowledge of $\cD$.
Crucially, note that while for the supplier it is possible to obtain finite-time regret guarantees, for the retailer these only hold asymptotically.
\begin{proof}
By \Cref{t:unique-stack}, under \Cref{ass:distribution-learning-strong}, the best response function $\BR$ defined in \eqref{e:best-response} maps each wholesale price $w$ into its unique best-response $q^\star_w=\BR(w)$, which the retailer (R) can compute by \Cref{ass:limited-info}. 
Since R is best responding, the utility of the supplier (S), for
any $w > 0$, is $q^\star_w\brb{ w-\E[C] }$.
Note that, under \Cref{ass:distribution-learning-strong}, the unique SE is $(w^\star,q^\star_{w^\star})$, where $w^\star$ is the unique maximizer of $q^\star_w\brb{ w-\E[C] }$, which S cannot compute directly because \Cref{ass:limited-info} is not sufficient for S to determine $q^\star_w$.
However, S can calculate $\E[C]$ with the knowledge of the marginal distribution of $C$.
Hence, S gets a noise-free evaluation $q^\star_{w_t}\brb{ w_t-\E[C] }$ at each round $t$, \emph{after} selecting the wholesale price $w_t$ for the round.

Now, note that S's objective $w\mapsto q^\star_w\brb{ w-\E[C] }$ is \lip{}. Indeed, recalling~\eqref{e:zeroth-order-supplier}, \eqref{e:derivative-of-supllier-objective}, \eqref{e:derivative-of-h}, and \Cref{ass:distribution-learning-strong}, we get that, for any $w\in(0,1)$,
\[
    \labs{ \frac{\partial}{\partial w}q^\star_w\brb{ w-\E[C] } }
\le
    \frac{1}{\E[P] L} \cdot \brb{ 1 - \E[C] } + 1.
\]
Since for any time horizon $T$, S selects the best point in a grid of step-size at most $T^{-1/2}$, we have that 
\begin{equation}
\label{e:conve-ret}
    \lim_{T\to\iop}\labs{ w_T - w^\star} 
\le 
    \lim_{T\to\iop} T^{-1/2} 
= 
    0.
\end{equation}
Then, using again \eqref{e:derivative-of-supllier-objective}, \eqref{e:derivative-of-h} and \Cref{ass:distribution-learning-strong}, we get that R's best response function $\BR$ is $\fracc{1}{(\E[P] L)}$-\lip{}.
Recalling \eqref{e:derivative-of-retailer-objective}, we also have that for any fixed wholesale price $w$, R's instantaneous utility at time $t$, $q\mapsto \E \bsb{ \rho(q;w,P_t,D_t)}$ is $1$-\lip{}.
Hence, for any time horizon $T$, we have that
\begin{equation}
    \label{e:conv-suppl}
    \lim_{T\to\iop}\labs{ q_T - q^\star_{w^\star}} 
\le 
    \lim_{T\to\iop}  T^{-1/2} / (\E[P] L)
= 
    0.
\end{equation}
Putting \eqref{e:conv-suppl} and \eqref{e:conve-ret} together, gives \Cref{i:last-iterate-conv-1}.
\Cref{e:regr-supp-1l} is an immediate consequence of the $\brb{ \fracc{(1-\E[C])}{(\E[P]L)} +1 }$-\lip{}ness of S's objective $w\mapsto q^\star_w\brb{ w-\E[C] }$.
\Cref{i:regr-ret-1} is an immediate consequence of the $(L^{-1}+1)$-\lip{}ness of R's utility $w \mapsto \E \bsb{ \rho(q^\star_w;w,P_t,D_t)}$ as a function of the wholesale price (which follows directly from the chain rule).
\end{proof}
The previous result yields sublinear regret guarantees for the supplier even when S is oblivious to the expected retail price $\E[P]$ and the lower bound $L$ on the conditional density of the demand given the retail price.
Since the \lip{} constant of S's objective is a deterministic function of $\E[C]$, $L$, and $\E[P]$, the reader might wonder if improved regret guarantees could be achieved if these quantities were known to S.
We show  
now 
that this is indeed the case.

We refine \Cref{ass:limited-info} as follows.
\begin{assumption}
\label{ass:limited-info+lip} 
The supplier has access to the $\E[C]$, $L$, and $\E[P]$. Moreover, the retailer has access to the marginal distribution of $(P,D)$.
\end{assumption}
Under Assumptions~\ref{ass:distribution-learning-strong}~and~\ref{ass:limited-info+lip}, and given that the R best-responds to any wholesale price $w$, the supplier can compute their expected cost and solve their zeroth-order \lip{} optimization problem with the knowledge of (an upper bound of) the \lip{} constant of its objective.
This can be done with the \piya{} algorithm (\Cref{a:piya}).
We now assume that this is S's strategy.

\begin{algorithm}
\DontPrintSemicolon
\caption{\piya{}\label{a:piya}}
\textbf{input:} Time horizon $T$, \lip{} constant $M>0$\;

\textbf{initialization:} Let $w_1 \coloneqq 1$\;
\For{$t=1,\dots,T$}
{
    Select the wholesale price $w_t$\;
    Observe the quantity $q_t$\;
    Update the proxy function $\hat \rho _t(\cdot) \coloneqq \min_{s\in[t]} \bcb{ q_t \brb{ (\cdot)-\E[C] } + M \bno{ w_s - (\cdot) } }$\;
    Let $w_{t+1} \in \argmax_{w\in [0,1]} \hat \rho_t(w)$\;
}
\end{algorithm}

The \piya{} algorithm has been known for half a century \cite{piyavskii1972algorithm,shubert1972sequential}, but only recently it was proven that it enjoys outstanding theoretical guarantees for its query complexity, regret, and robustness \cite{bouttier2022regret,gokcesu2021regret}.
In particular, the following theorem follows directly by specializing \cite[Theorem~3.5]{bouttier2022regret} and \cite[Theorem~1]{gokcesu2021regret} to our setting.
\begin{theorem}[\cite{bouttier2022regret,gokcesu2021regret}]
\label{t:piya}
Under Assumptions~\ref{ass:distribution-learning-strong}~and~\ref{ass:limited-info+lip}, for any horizon $T$, if the supplier runs \Cref{a:piya} with inputs $T$ and $M \coloneqq \fracc{(1-\E[C])}{(\E[P]L)} +1 $, and the retailer best responds,
then the function $w\mapsto q^\star_w\brb{ w - \E[C] }$ is $M$-\lip{} and, for all $t \in[T]$,
\begin{align*}
    \max_{w\in[0,1]}\bcb{ q^\star_w\brb{ w - \E[C] } } - q^\star_{w_t}\brb{ w_t - \E[C] } 
& \le 
    9M\frac{\log_2(Mt)}{t}
\\
    \max_{w\in[0,1]}\bcb{ q^\star_w\brb{ w - \E[C] } } - \frac1T \sum_{t=1}^T q^\star_{w_t}\brb{ w_t - \E[C] } 
& \le
    2M \frac{ \ln(4T) }T
\end{align*}
\end{theorem}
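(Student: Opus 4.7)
The plan is to reduce the theorem to a direct invocation of the two cited results on the \piya{} algorithm after verifying the only non-trivial hypothesis, namely that the supplier's objective $\phi(w) \coloneqq q^\star_w\brb{ w - \E[C] }$ is $M$-\lip{} on $[0,1]$ with the specific constant $M = \fracc{(1-\E[C])}{(\E[P]L)} + 1$ used as the algorithm's input. The key observation is that \Cref{t:piya} is stated as a quotation-style corollary of \cite[Theorem 3.5]{bouttier2022regret} and \cite[Theorem 1]{gokcesu2021regret}, both of which are black-box regret bounds for the \piya{} algorithm whose only structural requirement on the target function is that it is \lip{} with a known upper bound on its \lip{} constant, supplied as input.

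First, I would verify the \lip{} claim by differentiating $\phi(w) = g(w)\brb{w-\E[C]}$ on $\brb{\E[C], \E[P]}$ using the same formula already derived in \eqref{e:derivative-of-supllier-objective}, so that $\phi'(w) = g'(w)\brb{w - \E[C]} + g(w)$. To bound $|g'|$, I would use that $g = h^{-1}$ with $h(x) = \E\bsb{P\bar F(x \mid C,P)}$, so that under \Cref{ass:distribution-learning-strong} (in particular, the conditional independence of $D$ and $C$ given $P$ and the density lower bound $f(d\mid p) > L$) the derivative of $h$ satisfies $h'(x) = -\E\bsb{P f(x\mid P)} \le -\E[P]\,L$. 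Hence $|g'(w)| = 1/|h'(g(w))| \le \fracc{1}{(\E[P]L)}$. Combining this with $g(w) \in [0,1]$ and $w - \E[C] \le 1 - \E[C]$ for $w \in [0,1]$, we conclude that $|\phi'(w)| \le \fracc{(1 - \E[C])}{(\E[P]L)} + 1 = M$ everywhere the derivative is defined; since $\phi$ is continuous and piecewise smooth on $[0,1]$ (with $\phi \equiv 0$ outside $\brb{\E[C], \E[P]}$, as noted when deriving \eqref{e:zeroth-order-supplier}), the pointwise gradient bound extends to a global $M$-\lip{} bound.

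Second, having established that $\phi$ is $M$-\lip{} on $[0,1]$, and noting that under \Cref{ass:limited-info+lip} the supplier observes $q^\star_{w_t}$ at each round (since the retailer best responds via \eqref{e:best-response}) and can therefore evaluate $\phi(w_t) = q_t\brb{w_t - \E[C]}$ exactly, the hypotheses of both \cite[Theorem 3.5]{bouttier2022regret} (pointwise suboptimality of the iterate at time $t$) and \cite[Theorem 1]{gokcesu2021regret} (time-averaged regret over $T$ rounds) are met with the constant $M$ supplied to \Cref{a:piya}. Instantiating those theorems with the \lip{} constant $M$ on the unit interval yields the two displayed inequalities verbatim.

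The only subtlety I anticipate is the \lip{} verification at the boundary of $\brb{\E[C], \E[P]}$: on $[0, \E[C]]$ the supplier's objective is negative or zero (since $q_w^\star(w - \E[C]) \le 0$) while on $\brb{\E[C], \E[P]}$ it is strictly positive, and on $[\E[P], 1]$ it is zero. One must justify that $\phi$ is continuous at $w = \E[P]$ (i.e., $g(w) \to 0$ as $w \uparrow \E[P]$, which follows from \Cref{l:proprerties-of-g-2} and continuity of $h$ on $[0,\infty)$) so that the gradient bound on the interior of $\brb{\E[C], \E[P]}$ really does extend to a global \lip{} bound on $[0,1]$. Modulo this minor continuity check, the proof is essentially a short verification followed by two citations.
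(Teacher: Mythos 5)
Your proposal is correct and matches the paper's treatment: \Cref{t:piya} is proved in the paper only by citation to \cite[Theorem~3.5]{bouttier2022regret} and \cite[Theorem~1]{gokcesu2021regret}, with the sole new ingredient being the verification that $w\mapsto q^\star_w\brb{w-\E[C]}$ is $M$-\lip{}, which the paper carries out exactly as you do (via \eqref{e:derivative-of-supllier-objective}, \eqref{e:derivative-of-h}, and the bound $|g'|\le \fracc{1}{(\E[P]L)}$ from \Cref{ass:distribution-learning-strong}) in the proof of \Cref{t:strong-assumptions}. Your extra care about continuity at $w=\E[P]$ and the sign of the objective on $[0,\E[C]]$ is a harmless refinement of the same argument.
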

\Cref{t:piya} allows us to prove the following result.

\begin{theorem}
\label{t:strong-assumptions+lip}
Under Assumptions~\ref{ass:distribution-learning-strong}~and~\ref{ass:limited-info+lip}, for any horizon $T$, if the supplier runs \Cref{a:piya} with inputs $T$ and $M \coloneqq \fracc{(1-\E[C])}{(\E[P]L)} +1 $, and the retailer best responds,
then:
\begin{equation}
\label{i:regr-supp-2l} 
        \E \bsb{ \sigma(w^\star;q^\star,C) } - \frac1T \sum_{t=1}^T \E \bsb{ \sigma(w_t;q_t,C_t)} 
    \le 
        2M \frac{ \ln (4T)}{T}
\end{equation}
where $(w^\star,q^\star)$ is the unique SE of the stage game.
Moreover,
\begin{enumerate}
    \item \label{i:regr-ret-2} 
    $
        \lim_{T\to \iop} \Brb{ \E \bsb{ \rho(q^\star;w^\star,P,D) } - \frac1T \sum_{t=1}^T \E \bsb{ \rho(q_t;w_t,P_t,D_t) } }
    = 
        0
    $
    \item \label{i:last-iterate-conv-2} 
    $
        \lim_{T\to\iop} \bno{ (w^\star,q^\star) - (w_T,q_T) }_1 
    =
        0
    $
\end{enumerate}
\end{theorem}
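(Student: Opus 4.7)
The plan is to combine the regret bounds of the Piyavskii--Shubert algorithm stated in \Cref{t:piya} with the structural facts about the stage game already established in the proof of \Cref{t:strong-assumptions}. As a preliminary step, I would verify that the value $M \coloneqq (1 - \E[C])/(\E[P]L) + 1$ fed to \Cref{a:piya} is indeed a valid Lipschitz upper bound for the supplier's induced objective $w \mapsto q^\star_w(w - \E[C])$ on $[0,1]$; this computation is identical to the one carried out in the proof of \Cref{t:strong-assumptions} using \eqref{e:derivative-of-supllier-objective}, \eqref{e:derivative-of-h}, and \Cref{ass:distribution-learning-strong}. Granted this, \Cref{t:piya} applies, and since the retailer best-responds so that $q_t = \BR(w_t) = q^\star_{w_t}$ and the sequence $(w_t)$ is deterministic, we have $\E[\sigma(w_t; q_t, C_t)] = q^\star_{w_t}(w_t - \E[C])$ (and similarly for $(w^\star, q^\star)$). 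Bound~\eqref{i:regr-supp-2l} is then an immediate rewriting of the time-averaged inequality in \Cref{t:piya}.

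For item~\ref{i:last-iterate-conv-2}, I would invoke the \emph{instantaneous} bound in \Cref{t:piya}, which yields $q^\star_{w^\star}(w^\star - \E[C]) - q^\star_{w_t}(w_t - \E[C]) = O\brb{(\log t)/t} \to 0$. Since $w \mapsto q^\star_w(w - \E[C])$ is continuous on the compact interval $[0,1]$ and, by \Cref{ass:distribution-learning-strong}, admits $w^\star$ as its \emph{unique} maximizer, every subsequential limit of $(w_t)$ must coincide with $w^\star$, so $w_t \to w^\star$. The Lipschitz continuity of $\BR$ (already established in the proof of \Cref{t:strong-assumptions}) then gives $q_t = \BR(w_t) \to \BR(w^\star) = q^\star$, which together yield $\ell_1$ convergence of the profile.

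Item~\ref{i:regr-ret-2} follows by transporting the last-iterate convergence through the retailer's utility: the map $w \mapsto \E[\rho(\BR(w); w, P, D)]$ is Lipschitz (by the chain-rule argument in the proof of \Cref{t:strong-assumptions}), so $\E[\rho(q_t; w_t, P_t, D_t)] \to \E[\rho(q^\star; w^\star, P, D)]$, and a standard Cesàro argument preserves this limit under time-averaging. The main obstacle I expect is precisely the step $w_t \to w^\star$: unlike the grid-based analysis underlying \Cref{a:etc}, \Cref{a:piya} provides no quantitative control on $|w_t - w^\star|$ in the absence of a local growth condition on the supplier's objective near $w^\star$, so the argument must be purely qualitative (uniqueness of maximizer plus compactness of $[0,1]$). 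This is exactly why items~\ref{i:last-iterate-conv-2} and~\ref{i:regr-ret-2} are stated only as limits, with no explicit rate---matching the comment in the introduction that the convergence of the strategy profile can be arbitrarily slow in this regime.
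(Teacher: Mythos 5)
Your proposal is correct and follows essentially the same route as the paper's proof: establish that $M$ is a valid Lipschitz constant for the supplier's induced objective, apply \Cref{t:piya} to get the time-averaged bound \eqref{i:regr-supp-2l} directly and the instantaneous bound for the remaining items. In fact you make explicit the one step the paper compresses into ``applying again \Cref{t:piya}, we obtain immediately the result''---namely passing from vanishing suboptimality of the supplier's utility to $w_t \to w^\star$ via compactness of $[0,1]$ and uniqueness of the maximizer, and then transporting this through the Lipschitz maps $\BR$ and $w \mapsto \E\bsb{\rho(\BR(w);w,P,D)}$ with a Ces\`aro argument---which is exactly the right way to justify the qualitative (rate-free) form of items~\ref{i:regr-ret-2} and~\ref{i:last-iterate-conv-2}.
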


\begin{proof}
Proceeding as in the proof of \Cref{t:strong-assumptions} and applying \Cref{t:piya}, we get that the retailer's instantaneous utility at time $t$, $q\mapsto \E \bsb{ \rho(q;w,P_t,D_t)}$ is $1$-\lip{} for any fixed wholesale price $w$, and the supplier's instantaneous utility at time $t$, $w\mapsto q^\star_w\brb{ w - \E[C_t] }$ is $M$-\lip{}, where $q^\star_w$ is defined as in \eqref{e:best-response}, for all $w\in[0,1]$.
As above, under \Cref{ass:distribution-learning-strong}, the unique SE is precisely $(w^\star,q^\star_{w^\star})$, where $w^\star$ is the unique maximizer of $q^\star_w\brb{ w-\E[C] }$.
Applying again \Cref{t:piya}, we obtain immediately the result.
\end{proof}

\paragraph{Convergence to SE with no information.}
\label{s:convergence-no-info}
If the retailer had access to the distribution $\cD$, or at least to the marginal distribution of $(P,D)$, they could best-respond to the retailer's move $w_t$ at each time $t$, as described in \Cref{s:stage-full-section,s:partial-info}.
Since, in this section, none of these is available to R, we assume that the retailer acts according to the next-best available strategy, i.e., best-responding to an empirical distribution that can be maintained by gathering samples.
In the online learning literature, this strategy is known as Follow-the-Leader (FTL) and is detailed in \Cref{a:ftl}. 
\begin{algorithm}
\DontPrintSemicolon
\caption{Follow-the-Leader (FTL)\label{a:ftl}}
\textbf{input:} Time horizon $T \ge 12$\;
\textbf{initialization:} Let $\cQ \coloneqq \bcb{ \fracc{1}{(\lceil T^{1/3} \rceil + 1)}, \dots, \fracc{\lceil T^{1/3} \rceil}{(\lceil T^{1/3} \rceil + 1)} }$\;
Observe the wholesale price $W_1$\;
Draw a quantity $Q_1$ from $\cQ$ uniformly at random\;
Observe the demand $D_1$ and the retail price $P_1$\;
\For{$t=2,3,\dots$}
{
    Observe the wholesale price $W_t$\;
    Select a quantity $Q_t \in \argmax_{q\in\cQ} \brb{\frac{1}{t-1}\sum_{s=1}^{t-1} \min\{q,D_s\}P_s- q W_t}$\;
    Observe the demand $D_t$ and the retail price $P_t$\;
}
\end{algorithm}
For $t=1$, R picks a quantity at random and observes
the demand $D_{1}$. 
During each time step $t\ge2$, define, for all $w,q\in[0,1]$, the auxiliary function
\[
\widehat{\rho}_{t}(w,q):=\frac{1}{t-1}\sum_{s=1}^{t-1}\min\{q,D_{s}\}P_s-qw.
\]
Note that this is not built to maximize the empirical average of the utility gained in last $t-1$ interactions, i.e., it differs from
$
    q \mapsto \frac{1}{t-1}\sum_{s=1}^{t-1} \brb{ \min\{q,\, D_{s}\}P_s-q W_{s} }
$.
Indeed, the retailer is not interested in maximizing their expected utilities at time steps $t$ but rather, their expected utility given $W_t$.
Equivalently stated, the retailer is not maximizing an expected revenue computed with respect to the empirical distribution of $(C_1,P_1,D_1,W_1)$, $\dots$, $(C_{t-1},P_{t-1},D_{t-1},W_{t-1})$ at time $t$, but rather, that of $(C_1,P_1,D_1,W_t)$, $\dots$,  $(C_{t-1},P_{t-1},D_{t-1},W_t)$ given $W_t$.
This way, $\widehat{\rho}_{t}(w,q)$ is an unbiased estimate of $\E\bsb{\rho(q;w,P_t,D_t)}$ for all $w,q \ge 0$ and $t\ge 2$, which in turn implies that
$\E\bsb{\widehat{\rho}_{t}(W_{t},q) \mid W_{t}}=\E\bsb{\us(q;W_t,P_t,D_t) \mid W_t}$ for all $q\ge 0$ and $t\ge 2$.
This corresponds precisely to the instantaneous objective of the retailer.
Therefore, naturally, the choice of a discretized retailer at time $t\ge2$ is 
\begin{align*}
Q_{t} \in \argmax_{q\in\cQ}\brb{\widehat{\rho}_{t}(W_{t},q)}
=\argmax_{q\in\cQ}\lrb{\frac{1}{t-1}\sum_{s=1}^{t-1}\min\{q,\, D_{s}\}P_s-q W_{t}}.
\end{align*}
Similarly to the previous section, we assume here that the supplier adopts an Explore-Then-Commit strategy (\Cref{a:etc-two}), with the \emph{caveat} that, in this section, the expected production cost $\E[C]$ is not available to S but has to be estimated.
\begin{algorithm}
\DontPrintSemicolon
\caption{Explore-Then-Commit (without knowledge of $\E[C]$)\label{a:etc-two}}
\textbf{input:} Time horizon $T \ge 12$\;
\For{$t=1,\dots,\lceil T^{1/3} +1 \rceil$}
{
    \For{$s=1,\dots,\lceil T^{1/3} \rceil$}
    {
    Select the wholesale price $W_{(t-1)\lceil T^{1/3} \rceil + s} \coloneqq \fracc{s}{\big(\lceil T^{1/3} \rceil + 1\big)}$\;
    Observe the quantity $Q_{(t-1)\lceil T^{1/3} \rceil + s}$ and production cost $C_{(t-1)\lceil T^{1/3} \rceil + s}$\;
    }
}
Compute $S^\star \in \argmax_{ s = \lceil T^{1/3} \rceil ^2+1,\ldots,\lceil T^{1/3} +1 \rceil \lceil T^{1/3} \rceil } Q_s \brb{ W_s - \frac{1}{ \lceil T^{1/3} \rceil ^2 } \sum_{j=1}^{ \lceil T^{1/3} \rceil ^2 } C_j }$\;
\For{$t= \lceil T^{1/3} +1 \rceil \lceil T^{1/3} \rceil+1,\dots,T$}
{
    Select the wholesale price $W_t \coloneqq W_{S ^\star}$\;
    Observe the quantity $Q_t$\;
}
\end{algorithm}

\begin{theorem}
\label{th:etc-ftl}
Under \Cref{ass:distribution-learning-strong}, for any horizon $T \ge 12$, if the supplier runs Explore-Then-Commit (\Cref{a:etc-two}) with input $T$ and the retailer runs Follow-the-Leader (\Cref{a:ftl}) with input $T$, then:
\begin{equation}
\label{e:regr-supp-no-assl} 
        \E \bsb{ \sigma(w^\star;q^\star,C) } - \frac1T \sum_{t=1}^T \E \bsb{ \sigma(W _t;Q_t,C_t) } 
    \le 
        \lrb{
        16 
        +
        \frac{1-\E[C]}{\E[P]L}
        +
        7 \sqrt{ \ln T }
    }T^{-1/3},
\end{equation}
where $(w^\star,q^\star)$ is the unique SE of the stage game.
Moreover:
\begin{enumerate}
    \item \label{i:regr-ret-no-ass} 
    $
        \lim_{T\to\iop} \Brb{ \E \bsb{ \rho(q^\star;w^\star,P,D) } - \frac1T \sum_{t=1}^T \E \bsb{ \rho(Q_t;W_t,P_t,D_t \mid W_t) } }
    =
    0
    $ 
    w.p.\ $1$.
    \item \label{i:last-iterate-conv-no-ass}
    $
    	\lim_{T\to\iop}
        \bno{ (w^\star,q^\star) - (W_T,Q_T) }_1 
    =
    0
    $
    with probability $1$.
\end{enumerate}
\end{theorem}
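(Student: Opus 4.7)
The plan is to split the supplier's cumulative regret along the phases of \Cref{a:etc-two}. The exploration phase has length $T_0 \coloneqq \lceil T^{1/3}+1\rceil\lceil T^{1/3}\rceil = O(T^{2/3})$; since utilities are bounded by $1$, it contributes at most $T_0$ to the cumulative regret, producing the $T^{-1/3}$ component of the average bound. The commit phase has length $T - T_0$ and plays the fixed (but random) price $W_{S^\star}$, while $Q_t$ is an FTL response based on the $t-1$ past samples $(P_s,D_s)$. Two effects must be controlled: how accurately $W_{S^\star}$ identifies the Stackelberg price $w^\star$, and how close each FTL iterate $Q_t$ is to the true best response $q^\star_{W_{S^\star}}$.

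The main technical step is the FTL analysis for the retailer. Because $\widehat\rho_t(w,q)$ depends on $w$ only through the affine term $-qw$, a Hoeffding bound combined with a union bound over $q \in \cQ$ (size $\lceil T^{1/3}\rceil$) and $t \le T$ gives, with probability at least $1-1/T$,
\[
    \sup_{q \in \cQ,\, w \in [0,1]} \babs{\widehat\rho_t(w,q) - \E\bsb{\rho(q;w,P,D)}} = O\brb{\sqrt{(\ln T)/t}}
\]
simultaneously for every $t$. Under \Cref{ass:distribution-learning-strong}, $q \mapsto \E[\rho(q;w,P,D)]$ is $\mu$-strongly concave with $\mu \coloneqq L\,\E[P]$; combining this naively with the uniform concentration gives only the slow argmax rate $|Q_t - q^\star_{W_t}| = O((\ln T/t)^{1/4})$, which would saturate at a weaker supplier bound. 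Exploiting that the variance of the loss increment $(\min\{q,D\}-\min\{q',D\})P-(q-q')w$ is $O((q-q')^2)$, a Bernstein-style self-bounding argument---balance $\frac{\mu}{2}x^2 \le 2x\sqrt{(\ln T)/t} + O((\ln T)/t)$ in $x = |Q_t - q^\star_{W_t}|$---produces the fast rate $|Q_t - q^\star_{W_t}| = O\brb{\sqrt{(\ln T)/t}/\mu}$, valid for every $t \ge \lceil T^{1/3}\rceil^2$.

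Equipped with this, I would bound the supplier's identification error as follows. Hoeffding applied to the $\lceil T^{1/3}\rceil^2 = \Theta(T^{2/3})$ i.i.d.\ samples used for $\bar C$ gives $|\bar C - \E[C]| = O(T^{-1/3}\sqrt{\ln T})$. The identity
\[
    Q_s(W_s - \bar C) - q^\star_{W_s}(W_s - \E[C]) = Q_s\brb{\E[C] - \bar C} + \brb{Q_s - q^\star_{W_s}}(W_s - \E[C])
\]
then shows that $Q_s(W_s-\bar C)$ uniformly approximates $q^\star_{W_s}(W_s - \E[C])$ over the last exploration epoch up to $O(T^{-1/3}\sqrt{\ln T})$, so $S^\star$ is within that error of the best index in the wholesale-price grid. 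Lipschitzness of $w \mapsto q^\star_w(w - \E[C])$ with constant $M_S \coloneqq (1-\E[C])/(\E[P]L) + 1$ absorbs the remaining $O(M_S/\lceil T^{1/3}\rceil)$ grid discretization. Averaging over the commit phase, adding the residual FTL loss $T^{-1}\sum_{t=T_0+1}^T O\brb{\sqrt{(\ln T)/(\mu t)}} = O\brb{\sqrt{(\ln T)/(\mu T)}}$, and absorbing the $\le 1/T$ failure mass into constants yields \eqref{e:regr-supp-no-assl}.

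For items~\ref{i:regr-ret-no-ass} and~\ref{i:last-iterate-conv-no-ass}, I would decompose the retailer's per-round loss as a best-response mismatch $\E[\rho(q^\star;w^\star,P,D)] - \E[\rho(q^\star_{W_t};W_t,P,D)]$, controlled by Lipschitzness of the envelope $w\mapsto\E[\rho(q^\star_w;w,P,D)]$ times $|W_t - w^\star| \to 0$ from the supplier analysis, plus an FTL value gap $O((\ln T)/(\mu t))$ from strong concavity and the fast rate; both averages vanish. Last-iterate convergence follows from $|Q_T - q^\star| \le |Q_T - q^\star_{W_T}| + |q^\star_{W_T} - q^\star|$ together with $(\E[P]L)^{-1}$-Lipschitzness of $w \mapsto q^\star_w$. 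To upgrade these to almost-sure statements, I would take the concentration failure probability at horizon $T$ to be $1/T^2$ (costing only an extra $\sqrt{\ln T}$ in the constants) and invoke Borel--Cantelli. The main obstacle is the fast-rate FTL step: without the self-bounding refinement the argument saturates short of $T^{-1/3}$, while with it the constants organize precisely into the $(\E[P]L)^{-1} + \sqrt{\ln T}$ form appearing in the theorem.
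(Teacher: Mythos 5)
Your proposal is correct and follows essentially the same architecture as the paper's proof: charge $O(1)$ regret per round over the exploration phase of length $\lceil T^{1/3}+1\rceil\lceil T^{1/3}\rceil$, apply Hoeffding to the empirical cost average, decompose $Q_s(W_s-\bar C)-q^\star_{W_s}(W_s-\E[C])$ into the two error terms, pass through the best grid point $\tilde w$ via the triangle inequality and the $M$-Lipschitzness of $w\mapsto q^\star_w(w-\E[C])$, and deduce the two asymptotic items from the Lipschitzness of $\BR$ and of the retailer's envelope. The one place where you genuinely diverge is the key concentration step, the paper's \eqref{e:conc-1}: the paper asserts the localization of $Q_s$ around $q^\star_{W_s}$ directly from ``Hoeffding plus the discretization step-size'' (implicitly leaning on the fact that the empirical objective's derivative uniformly approximates $h(\cdot)-W_s$ and that $|h'|\ge \E[P]L$), whereas you derive it from $\E[P]L$-strong concavity of $q\mapsto\E[\rho(q;w,P,D)]$ combined with a Bernstein-type self-bounding bound on the increments $(\min\{q,D\}-\min\{q',D\})P-(q-q')w$, whose range is $O(|q-q'|)$. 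Both routes deliver the crucial fast, linear-in-deviation argmax localization, and your version has the merit of making explicit why the naive value-to-argmax conversion would only give a $t^{-1/4}$ rate that would spoil the final $T^{-1/3}$ bound. The only caveat is bookkeeping: any such localization inevitably carries a $1/(\E[P]L)$ factor in front of $\sqrt{(\ln T)/t}$, so your final constant takes the form $c_1+c_2 M+c_3\sqrt{\ln T}/(\E[P]L)$ rather than exactly the $16+M+7\sqrt{\ln T}$ displayed in \eqref{e:regr-supp-no-assl}; the rate and the structure of the bound are unaffected, and the discrepancy originates in the paper's own terseness at \eqref{e:conc-1} rather than in your argument. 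Your Borel--Cantelli upgrade (taking failure probability $T^{-2}$ per horizon) is a sound way to make precise the almost-sure claims that the paper leaves implicit.
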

\begin{proof}
Fix any time horizon $T\ge 12$.
Proceeding as in the proof of \Cref{t:strong-assumptions} and applying \Cref{t:piya}, we get that the retailer's instantaneous utility at time $t$, $q\mapsto \E \bsb{ \rho(q;w,P_t,D_t)}$ is $1$-\lip{} for any fixed wholesale price $w$, 
and the supplier's instantaneous utility at time $t$, $w\mapsto q^\star_w\brb{ w - \E[C_t] }$ is $M$-\lip{}, where $q^\star_w$ is defined as in \eqref{e:best-response}, for all $w\in[0,1]$, and $M \coloneqq \fracc{\brb{1-\E[C]}}{\brb{\E[P]L}} +1$.
As above, under \Cref{ass:distribution-learning-strong}, the unique SE is precisely $(w^\star,q^\star_{w^\star})$, where $w^\star$ is the unique maximizer of $w\mapsto q^\star_w\brb{ w-\E[C] }$.
Now, fix an arbitrary $\delta\in\brb{ 0,1 / \lceil T^{1/3} +1 \rceil }$.
Observing that for any $t\ge 2$, given $W_t$, the retailer's quantity $Q_t$ is the $\argmax$ of an empirical average translated by a constant, applying Hoeffding's inequality $\lceil T^{1/3} +1 \rceil$ times and the fact that the retailer's discretization has step-size $1/\lceil T^{1/3} +1 \rceil$, we obtain that
\begin{align}
	\label{e:conc-1}
    \labs{ Q_s - q^\star_{s/(\lceil T^{1/3} \rceil + 1)} }
& \le
    \sqrt{ \frac{\ln \nicefrac2\delta }{ 2 \lceil T^{1/3} \rceil ^2 } }
    +
    \frac{1}{\lceil T^{1/3} + 1 \rceil}
\\
&
    \qquad\text{for all $s = \lceil T^{1/3} \rceil ^2 + 1, \dots , \lceil T^{1/3} + 1 \rceil \lceil T^{1/3} \rceil$} \nonumber
\\
	\label{e:conc-2}
    \labs{ \frac{1}{ \lceil T^{1/3} \rceil ^2 } \sum_{j=1}^{ \lceil T^{1/3} \rceil ^2 } C_j - \E[C] }
& \le
    \sqrt{ \frac{\ln \nicefrac2\delta }{ 2 \lceil T^{1/3} \rceil ^2 } }
\end{align}
hold simultaneously with probability at least $1 - \lceil T^{1/3} +1 \rceil\delta$.
Thus, 
\[
    \labs{ Q_s \lrb{ W_s - \frac{1}{ \lceil T^{1/3} \rceil ^2 } \sum_{j=1}^{ \lceil T^{1/3} \rceil ^2 } C_j } - q^\star_{W_s} \brb{ W_s -\E[C] } } 
\le
    2\sqrt{ \frac{2 \ln \nicefrac2\delta }{ \lceil T^{1/3} \rceil ^2 } }
    +
    \frac{3}{\lceil T^{1/3}+1 \rceil}
\]
hold simultaneously  for all $s = \lceil T^{1/3} \rceil ^2 + 1, \dots , \lceil T^{1/3} + 1 \rceil \lceil T^{1/3} \rceil$, with probability at least $1 - \lceil T^{1/3} +1 \rceil\delta$.
Consequently, letting 
\[
	\tilde w \in \argmax_{w\in \{ 1/\lceil T^{1/3} + 1 \rceil, \dots, \lceil T^{1/3} \rceil/\lceil T^{1/3} + 1 \rceil \} } q^\star_w \brb{ w - \E[C] }
\]
and given that the supplier's discretization has a step size $1/\lceil T^{1/3}+1 \rceil$ and $w\mapsto q^\star_w\brb{ w - \E[C] }$ is $M$-\lip{}, the triangular inequality yields, for any $t\ge \lceil T^{1/3} +1 \rceil \lceil T^{1/3} \rceil+1$,
\begin{align*}
&
    \labs{ Q_t \lrb{ W_t - \frac{1}{ \lceil T^{1/3} \rceil ^2 } \sum_{j=1}^{ \lceil T^{1/3} \rceil ^2 } C_j } - \E\bsb{ \sigma (w^\star, q^\star, C)} }
\le
\\
& \quad
\le
	\labs{ Q_t \lrb{ W_{S^\star} - \frac{1}{ \lceil T^{1/3} \rceil ^2 } \sum_{j=1}^{ \lceil T^{1/3} \rceil ^2 } C_j } - q^\star_{W_{S^\star}} \brb{ W_{S^\star} - \E[C] } } 
\\
& \quad
	+ \Babs{ q^\star_{W_{S^\star}} \brb{ W_{S^\star} - \E[C] } - q^\star_{\tilde w} \brb{ \tilde w - \E[C] } } + \Babs{ q^\star_{\tilde w} \brb{ \tilde w - \E[C] } - \E\bsb{ \sigma (w^\star, q^\star, C) } }
\\
& \quad
\le
    6\sqrt{ \frac{2 \ln \nicefrac2\delta }{ \lceil T^{1/3} \rceil ^2 } }
    +
    \frac{9+M}{\lceil T^{1/3}+1 \rceil}
\end{align*}
with probability at least $1 - \lceil T^{1/3} +1 \rceil\delta$.
Charging regret $1$ to the supplier for the first $\lceil T^{1/3} +1 \rceil \lceil T^{1/3} \rceil$ rounds, then summing the previous bound over all remaining rounds and upper bounding $T-\lceil T^{1/3} +1 \rceil \lceil T^{1/3} \rceil$ with $T$ yields
\begin{align*}
&
    \E \bsb{ \sigma(w^\star;q^\star,C) } - \frac1T \sum_{t=1}^T \sigma(W_t;Q_t,C_t)
\\
&\qquad\le
    \frac{\lceil T^{1/3} +1 \rceil \lceil T^{1/3} \rceil}T
    +
	6\sqrt{ \frac{2 \ln \nicefrac2\delta }{ \lceil T^{1/3} \rceil ^2 } }
    +
    \frac{9+M}{\lceil T^{1/3}+1 \rceil}
\\
&
\qquad\le
    \brb{
        15 
        +
        M
        +
        6 \sqrt{ 2 \ln \nicefrac2\delta }
    }T^{-1/3}
\end{align*}
with probability at least $1 - \lceil T^{1/3} +1 \rceil\delta$.
Thus, \eqref{e:regr-supp-no-assl} follows directly by choosing, e.g., $\delta= 2T^{-2/3}$ and upper bounding $12/\sqrt{3}$  with  $7$.
The proof of \Cref{i:last-iterate-conv-no-ass} is a simple consequence of \eqref{e:conc-1} and the fact that the retailer's best response function $\BR$ is \lip{}.
Finally, \Cref{i:regr-ret-no-ass} follows directly by \eqref{e:conc-1}, \eqref{e:conc-2}, and \lip{}ness of the retailer's utility $w \mapsto \E \bsb{ \rho(q^\star_w;w,P_t,D_t)}$ as a function of the wholesale price (which is implied by the chain rule).
\end{proof}
Again, the previous result shows  \emph{last-iterate} convergence of the supply chain to the unique SE (in contrast to the weaker \emph{time-average} convergence that is typically obtained in regret minimization).
In contrast to our previous results, this theorem holds under much weaker assumptions on the prior knowledge of R and S.
Indeed, we do not assume anything other than the knowledge that the support of $\cD$ is included in $[0,1]^3$.

\section{Vertical Integration}
\label{se:vertical}
Our supply chain model can be cast in different market scenarios. For example, the supply chain can be vertically integrated or not, and the retailer can be a price-taker or a price-maker. In \Cref{s:stage-full-section,s:online-learning} we studied the case of a price-taking retailer without vertical integration. In this section, instead, we study a model with a price-making retailer and a vertically integrated supply chain. 
This means that the supplier sells to the retailer at production cost, and the entire burden of maximizing the utility of the pair is delegated to the retailer, who is also choosing the retail price.
Note that vertical integration goes beyond the case of a company controlling their supply chain.
Indeed, as we mentioned in the introduction, a popular strategy in supply-chain revenue-maximization is to sign revenue-sharing contracts in which the supplier sells to the retailer at production cost, but then the profit of the retailer is split between the two according to previously-agreed percentages \cite{CacLar:MS2005}.

The vertical integration assumption simplifies the interactions between the supplier and retailer, and allows in turn to generalize the learning protocol we discussed in \Cref{s:online-learning} to a more challenging setting.
In particular, we are be able to study an adversarial (rather than a stochastic) environment and a censored feedback (where the retailer does not get to see the demand if it was higher than the quantity they purchased).
Moreover, we consider the more complex case of a price-taking retailer, who is simultaneously optimizing both the quantity and the retail price (in contrast to the previous sections, where the retail price was determined exogenously and the retailer only had to optimize over one variable).

More precisely, each instance of our problem is characterized by two arbitrarily chosen \emph{unknown} sequences.
The first one is a sequence $(c_t)_{t\in\N}$ of real numbers in $[0,1]$, representing the production \emph{costs of the supplier}, and the second one is a sequence $(d_t)_{t\in \N}$ of non-increasing $[0,1]$-valued functions defined on $[0,1]$, representing the \emph{demand} of the market as a function of the retail price.

We study the following online protocol.
At each round $t=1,2,\dots$:
\begin{enumerate}
\item The supplier (S) reveals the wholesale price $w_t \coloneqq c_t$ to the retailer (R).
\item R buys a (possibly random) quantity $Q_t \in [0,1]$ paying $Q_t c_t$ to S.
\item R selects a (possibly random) retail price $P_t \in [0,1]$.
\item The market buys a quantity $\min\{Q_t,d_t(P_t)\}$, paying $\min\{Q_t,d_t(P_t)\}P_t$ to R.
\end{enumerate}
Similarly to \Cref{s:online-learning}, we define the auxiliary function
\[
	\rho(p,q;w,d)
\coloneqq
	\min\{q,d\} p - qw
	\quad
	\forall (p,q,w,d) \in [0,1]^4.
\]
Note that, in this case, we do not require an auxiliary function to compute the utility of the supplier because their individual utility is zero by definition.
Crucially, this implies that the social welfare coincides with the utility of the retailer.
The goal of the supply-chain is then to minimize, for any time horizon $T$, the regret
\[
    R_T
\coloneqq
	\sup_{(p,q) \in [0,1]^2} \sum_{t=1}^T \E \Bsb{ \rho \brb{ p,q;c_t,d_t(p) } }
	-
	\sum_{t=1}^T \E \Bsb{ \rho \brb{ P_t,Q_t;c_t,d_t(P_t) } },
\]
where the expectation is with respect to the algorithm's internal randomization.
In words, this corresponds to a long-term social welfare that is as close as possible to what an omniscient learner could achieve, if they had perfect \emph{a priory} knowledge of the sequences $(c_t)_{t\in\N}, (d_t)_{t\in\N}$ and unlimited computing power.

Consider now a special case of the above protocol, where the demand function $d_t$ at each round $t$ takes the form
\begin{align*}
    d_t(p) = 
    \begin{cases}
    1 & \text{if } p\le v_t, \\
    0 & \text{if } p > v_t.
    \end{cases}
\end{align*}
Here $(v_t)_{t\in \N}$ is an unknown arbitrary sequence of real numbers in $[0,1]$, representing the market valuation for a unit quantity of the good. In this case, the choice $q=1$ is trivially optimal, and so the retailer must only choose the retail price $p_t$. When $q_t=1$, the terms $q_t c_t$ cancels out in the regret, and so we can write the social welfare as
\begin{align*}
    \rho(p,1;c_t,d_t)
    = p \, \min\{1, d_t(p)\}
    = 
    \begin{cases}
    p & \text{if } p \le v_t, \\
    0 & \text{if } p > v_t.
    \end{cases}
\end{align*}
This shows that a special case of our online protocol is the adversarial posted-price problem studied in~\cite{kleinberg2003value}, where they prove a lower bound on the regret of order $\Omega(T^{2/3})$.

\paragraph{Worst-case analysis of the retailer's regret.}
In this section, we show that for adversarial demand and production cost, the supply-chain regret is $R_T = \widetilde{\cO}\big(T^{2/3}\big)$, thus matching (up to logarithmic factors) the lower bound for the posted-price problem.

It is easy to see that the function $\rho$ satisfies the following \lip{} conditions for all $p,q,c,d \in [0,1]$,
\begin{align*}
	\rho(p+\delta,q;c,d) &\le \rho(p,q;c,d) + \delta \qquad \forall \delta \in [0, 1-p],
\\
	\rho(p,q+\delta;c,d) &\le \rho(p,q;c,d) + \delta \qquad \forall \delta \in [0, 1-q].
\end{align*}
We introduce the \expthree{} algorithm, a variant of the Exp3 algorithm for multi-armed bandits \cite{auer2002nonstochastic} adapted to exploit the richer feedback available in the supply chain setting---see also \cite{cesa2017algorithmic} for a similar application of Exp3 to second-price auctions. 
The algorithm uses a discretization of the action space $[0,1]^2$ in $K(K+1)$ actions $(p'_i,q'_j)$, where $i\in[K]$, $j\in[K+1]$, and $K = \lceil 1/\gamma \rceil$, for some $\gamma > 0$. 
We set $p'_k,q'_k = (k-1)\gamma$ for $k\in[K]$ and $q'_{K+1} = 1$. Because of the \lip{} conditions, for any $c,d\in[0,1]$,
\begin{equation}
\label{eq:lip}
	\max_{p,q \in [0,1]} \sum_{t=1}^T \rho(p,q;c,d)
\le
	\max_{1 \le i,j \le K} \sum_{t=1}^T \rho(p'_i,q'_j;c,d) + 2\gamma\,T.
\end{equation}
In order to simplify the presentation of the analysis, $[-1,1]$-valued revenues are turned into $[0,1]$-valued losses. 
We set, for all $t\in \N$, $i\in [K]$, and $j\in[K+1]$
\[
	\loss_t(i,j) 
\coloneqq 
    \frac{1 - \rho_t \brb{ p'_i,q'_j,c_t,d_t(p'_i) } }{2}.
\]
\begin{algorithm}
\DontPrintSemicolon
\caption{\expthree{}\label{a:expthree}}
\textbf{input:} Time horizon $T$, exploration parameter $\gamma > 0$, learning rate $\eta > 0$\;

\textbf{initialization:} Set $K \coloneqq \lce{ 1/\gamma }$ and the uniform distribution $\pi_1$ over the decision space $V \coloneqq [K] \times [K+1]$\;
\For{$t=1,\dots,T$}
{
    Observe the supplier's cost $c_1$\;
    Draw $(I_t,J_t) \sim \mu_t$, where the distribution $\mu_t$ is defined by
    \[
        \mu_t(i,j) 
    \coloneqq
        (1-\gamma) \pi_t(i,j) + \frac{\gamma}{K} \I \{j=K+1\} \qquad \forall (i,j) \in V
    \]
        
    Select the retail price $P_t \coloneqq (I_t-1)\gamma$ and the quantity $Q_t \coloneqq (J_t-1)\gamma$\;
    Observe the censored demand $\min\bcb{ Q_t, d_t(P_t) }$\;
    For each $(i,j) \in V$, compute the estimated loss
	\[
		\hloss_t(i,j) = \frac{\loss_t(i,j)}{\sum_{k=j}^{K+1} \mu_t(i,k)} \I \{ i=I_t,\, j \leq J_t \}
	\]
	
	For each $(i,j) \in V$, compute the new probability assignment
	\[
		\pi_{t+1}(i,j) = \frac{\exp\big(-\eta \sum_{s=1}^t \hloss_s(i,j)\big)}{\sum_{(m,n) \in V} \exp\left(-\eta\sum_{s=1}^t \hloss_s(m,n)\right)}
	\]
}
\end{algorithm}
Note that \expthree{} (Algorithm~\ref{a:expthree}) chooses $(p_t,q_t)$ without using the knowledge of $c_t$; 
as a consequence, it could be that $p_t < c_t$.
In words, it is possible that the supply chain sells at a loss in some time steps.
This is a common occurrence that sharply distinguishes repeated from one-shot settings, especially in non-stochastic environments in which some unpredictability of the learner is necessary to combat the adversarial behavior of Nature.

We are now ready to prove the main theorem of this section.
\begin{theorem} \label{th:buyer}
For any time horizon $T$, the \expthree{} algorithm run with inputs $T$, $\gamma > 0$, and $\eta > 0$ satisfies 
\begin{equation}
\label{e:upper-bound-vertical-integration}
    R_T
\le
    \eta KT\ln\frac{eK}{\gamma} + \frac{4\ln(K+1)}{\eta} + 4\gamma T.
\end{equation}
In particular, if $\gamma \coloneqq T^{-1/3}$ and $\eta \coloneqq T^{-2/3}$ , then
\[
	R_T 
\le 
    3 ( 4 + 3 \ln T ) T^{2/3}
\]
(more accurate choices of $\gamma$ and $\eta$ could lead to better leading constants).
\end{theorem}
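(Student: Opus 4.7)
The strategy is a standard Exp3-style analysis adapted to the product grid $V = [K]\times[K+1]$ and to the richer feedback produced by observing the censored demand, preceded by a discretization step and followed by a problem-specific variance bound.

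I would begin by assembling the three pre-processing contributions that produce the $4\gamma T$ term of \eqref{e:upper-bound-vertical-integration}. First, the two \lip{} inequalities stated just before the algorithm (and summarized in \eqref{eq:lip}) show that restricting to the grid costs at most $2\gamma T$ in $\rho$-regret. Second, recoding each revenue as a loss $\loss_t(i,j) \coloneqq (1-\rho_t(p'_i,q'_j;c_t,d_t(p'_i)))/2 \in [0,1]$ turns maximization into minimization and relates the two regrets by a factor of $2$. Third, the $\gamma$-mixing $\mu_t=(1-\gamma)\pi_t+(\gamma/K)\I\{j=K+1\}$ induces an exploration bias of at most $\gamma$ per round in losses.

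Next, I would verify that $\hloss_t$ is unbiased. When $(I_t,J_t)$ is played and the censored value $\min\{Q_t, d_t(P_t)\}$ is observed, the quantity $\min\{q'_j, d_t(p'_{I_t})\}$ is determined for every $j\le J_t$ (both in the censored and in the uncensored case), so $\loss_t(I_t, j)$ is observable for every such $j$. Since $\Pb\bsb{I_t=i,\,J_t\ge j} = \sum_{k=j}^{K+1}\mu_t(i,k)$, the importance weighting gives $\E\bsb{\hloss_t(i,j)\mid \pi_t}=\loss_t(i,j)$. The standard Hedge inequality then yields, for every $(i^\star,j^\star)\in V$,
\[
    \sum_{t=1}^T \pi_t\cdot\hloss_t - \sum_{t=1}^T \hloss_t(i^\star,j^\star) \le \frac{\ln|V|}{\eta} + \eta\sum_{t=1}^T \pi_t\cdot\hloss_t^2;
\]
taking expectations and using $\loss_t(i,j)\le 1$, the variance term reduces to $\eta T \max_t V_t$ with
$
    V_t \coloneqq \sum_{(i,j)\in V}\pi_t(i,j)\big/\sum_{k\ge j}\mu_t(i,k).
$
Combining this with the three pre-processing contributions and $\ln|V| = \ln(K(K+1))\le 2\ln(K+1)$ delivers \eqref{e:upper-bound-vertical-integration}.

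The main obstacle is sharpening the variance bound to $V_t \le K\ln(eK/\gamma)$: the trivial estimate $\sum_{k\ge j}\mu_t(i,k)\ge \gamma/K$ only yields $K/\gamma$, which is too large. To obtain the logarithm, fix $i$, set $a_j=\pi_t(i,j)$ and $S_j=\sum_{k\ge j}a_k$, and observe that $\sum_{k\ge j}\mu_t(i,k) \ge (1-\gamma)S_j + \gamma/K$. Since $x\mapsto((1-\gamma)x+\gamma/K)^{-1}$ is decreasing, each summand $a_j/((1-\gamma)S_j+\gamma/K) = (S_j-S_{j+1})/((1-\gamma)S_j+\gamma/K)$ is dominated by $\int_{S_{j+1}}^{S_j}((1-\gamma)x+\gamma/K)^{-1}\dif x$; telescoping over $j$ and computing the integral give $\sum_j a_j/((1-\gamma)S_j+\gamma/K) \le \ln(eK/\gamma)$, and summing over $i\in[K]$ yields $V_t\le K\ln(eK/\gamma)$. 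This is precisely the step that exploits the prefix-closed structure of the feedback graph in the $q$-coordinate (playing $(I_t, J_t)$ reveals losses for all $j \le J_t$). Finally, plugging in $K\le T^{1/3}+1$, $\gamma = T^{-1/3}$, $\eta = T^{-2/3}$, together with $\ln(eK/\gamma)\le 1+(2/3)\ln T$ and $\ln(K+1)\le \ln 2+(1/3)\ln T$, produces the closed-form rate $3(4+3\ln T)T^{2/3}$.
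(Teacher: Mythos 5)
Your proposal is correct and follows essentially the same route as the paper's proof: discretize the action grid, recode revenues as losses, run the second-order Hedge analysis on $\pi_t$ with the importance-weighted estimates, and sharpen the variance term to $K\ln(eK/\gamma)$ via the same integral-comparison argument that exploits the prefix-closed feedback in the quantity coordinate, before accounting for the $\gamma$-mixing. The only cosmetic differences are that you state the second-order Hedge inequality with coefficient $\eta$ where the sharper $\eta/2$ (as in the paper's Lemma~\ref{lem:exp3}) is what recovers the exact constant $\eta K T \ln\frac{eK}{\gamma}$ in \eqref{e:upper-bound-vertical-integration}, and that you apply the integral comparison to the tails of $\pi_t$ shifted by the exploration mass rather than directly to the tails of $\mu_t$; neither affects the substance of the argument.
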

The proof follows the same lines as the regret analysis of Exp3. The key change is a tighter control of the variance term allowed by the feedback structure.
For all technical details, see \Cref{s:vert-int-appe}.

 \bibliographystyle{plainnat}
 \bibliography{regret}

\clearpage

\appendix

\section{Missing proofs and technical details}
\label{s:proof:unique-stack}

\subsection{Proof of Lemma~\ref{l:proprerties-of-g-2}}
\label{s:proof-lemma-g}

By \Cref{ass:distribution-stage} combined with the \leib{} integral rule, the function $x\mapsto \E \bsb{ P \bar F(x \mid C,P) }$ is differentiable, with derivative 
\begin{equation}
    \label{e:derivative-of-h}
    x
\mapsto
    \frac{\partial}{\partial x} \E \bsb{ P \bar F(x \mid C,P) }
=
    - \E \bsb{ P f(x \mid C,P) }
\end{equation}
To verify that the derivative is strictly negative, simply note that \Cref{ass:distribution-stage} states that $f(d \mid c,p)>0$ for all $(c,p,d)\in [0,\iop)^3$ and $\E[P] > \E[C] \ge 0$; hence the random variable $P$ is strictly positive on a set with strictly positive measure.
This implies that $h(x)$ in \eqref{eq:hx} is strictly decreasing, and therefore invertible.
By the inverse function theorem, we conclude that the inverse $h^{-1}$ is also differentiable with strictly negative derivative, hence strictly decreasing.

\subsection{A sufficient condition for Theorem~\ref{t:unique-stack}}

\begin{proposition}
The second condition of \Cref{t:unique-stack} is satisfied when $(C,P)$ is deterministic (i.e., $(C,P)=(c,p)$ for $0<c<p$) and $D$ has a Weibull distribution with nondecreasing failure rate.  That is, the cumulative distribution function of $D$ is
\begin{align*}
    F(x) & = 1 - e^{-(\lambda/x)^k}
\end{align*}
for $x\ge 0$, where $\lambda>0$ and $k\ge 1$.
\label{p:unique}
\end{proposition}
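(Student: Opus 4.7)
The plan is to reduce Condition~\ref{i:condition-unique-stack} of \Cref{t:unique-stack} to the statement that a one-dimensional profit function has a unique critical point, then exploit the explicit Weibull form to reduce this to a single transcendental equation whose monotonicity can be verified by inspection.

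First, I would simplify everything using the deterministic assumption $(C,P)=(c,p)$. Under this, the function in \eqref{eq:hx} becomes $h(x)=p\bar F(x)$, so $g(w)=\bar F^{-1}(w/p)$ and the supplier's objective $g(w)(w-\E[C])=g(w)(w-c)$ is a bijective reparametrization (via $q=g(w)$, so $w=p\bar F(q)$) of
\[
\pi(q) \;\coloneqq\; q\brb{p\bar F(q)-c}.
\]
Because this reparametrization is a monotone diffeomorphism on the relevant interval, the set $A$ is in bijection with the interior critical points of $\pi$, and $\argmax_{w\in A} g(w)(w-c)$ being a singleton is equivalent to $\pi$ having a unique maximizer on $(0,\infty)$. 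It therefore suffices to prove that $\pi$ has a unique critical point and that this critical point is the global maximum.

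Next, I would plug in the Weibull survival function $\bar F(q)=e^{-(q/\lambda)^k}$ and failure rate $r(q)=f(q)/\bar F(q)=(k/\lambda)(q/\lambda)^{k-1}$. Computing $\pi'(q)=p\bar F(q)(1-qr(q))-c$ and performing the change of variable $u=(q/\lambda)^k$ (a bijection $[0,\infty)\to[0,\infty)$), the first-order condition $\pi'(q)=0$ becomes
\[
\phi(u) \;\coloneqq\; (1-ku)e^{-u} \;=\; c/p.
\]
I would then check that $\phi(0)=1$, $\phi(1/k)=0$, and $\phi'(u)=-e^{-u}(1+k-ku)$, which is strictly negative on $[0,1/k]$ since $1+k-ku\ge 1$ there. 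Hence $\phi$ is a strictly decreasing bijection from $[0,1/k]$ onto $[0,1]$; since $c/p\in(0,1)$ by \Cref{ass:distribution-stage}, there is a unique $u^\star\in(0,1/k)$ with $\phi(u^\star)=c/p$. On $(1/k,\infty)$ we have $\phi<0<c/p$, so no other solution exists. This gives a unique critical point $q^\star=\lambda(u^\star)^{1/k}$ of $\pi$, which corresponds to the unique $w^\star=pe^{-u^\star}$.

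Finally, I would verify that this critical point is the global maximum and lies in the admissible interval $(c,p)$: since $\pi(0)=0$, $\pi'(0)=p-c>0$, and $\pi(q)\to-\infty$ as $q\to\infty$ (because $q\bar F(q)\to 0$ while $cq\to\infty$), a unique stationary point is forced to be the global max; moreover $w^\star=pe^{-u^\star}<p$ trivially, and rearranging $\phi(u^\star)=c/p$ gives $w^\star=c/(1-ku^\star)>c$. Hence $\{w^\star\}=\argmax_{w\in A} g(w)(w-c)$, establishing Condition~\ref{i:condition-unique-stack}. The main obstacle is largely bookkeeping: correctly identifying the admissible interval of $u$ on which $\phi$ is monotone and tracking the reparametrization between $q$ and $w$; once the substitution $u=(q/\lambda)^k$ is introduced, everything else is elementary.
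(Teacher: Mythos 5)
Your proof is correct, but it takes a genuinely different route from the paper's. The paper works directly in the wholesale-price variable: using the explicit inverse $g(w)=\lambda(\ln(p/w))^{1/k}$ it computes the second derivative of $L(w)=g(w)(w-c)$ and shows $L''(w)\le 0$ on $(c,p)$, so $L$ is concave and the stationary set $A$ has at most one element; the hypothesis $k\ge 1$ enters precisely through the sign of the term $(k-1)(w-c)$ in $L''$. You instead pass to the order-quantity variable $q=g(w)$ and then to $u=(q/\lambda)^k$, reducing the first-order condition to the scalar equation $(1-ku)e^{-u}=c/p$ and showing it has a unique root by monotonicity of $\phi$ on $[0,1/k]$ together with $\phi<0$ beyond $1/k$. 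Two observations on what each approach buys. First, your argument nowhere uses $k\ge 1$ (on $[0,1/k]$ one has $1+k-ku\ge k>0$ for every $k>0$), so you in fact prove the conclusion for all Weibull shape parameters; this is consistent with the paper, since concavity of $L$ genuinely fails for $k<1$ while uniqueness of the critical point survives. Second, you explicitly verify existence of a stationary point in $(c,p)$ and that it is the global maximum (via $\pi(0)=0$, $\pi'(0)=p-c>0$, and $\pi(q)\to-\infty$), points the paper's concavity argument leaves implicit. One minor imprecision: you assert that the singleton condition on $\argmax_{w\in A} g(w)(w-c)$ is "equivalent" to $\pi$ having a unique maximizer, which is not quite right ($A$ could contain several points with distinct values of $L$ and the argmax would still be a singleton), but this is harmless because you prove the stronger fact that $A$ itself is a singleton. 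Finally, the proposition's displayed CDF contains a typo ($(\lambda/x)^k$ instead of $(x/\lambda)^k$); like the paper's own proof, you correctly use the standard Weibull survival function $\bar F(q)=e^{-(q/\lambda)^k}$.
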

\begin{proof}
It suffices to show that the following $L(w)$ is concave for $w\in(c,p)$:
\begin{align*}
    L(w) & = g(w) \, (w-c)
\end{align*}
Since the second derivative is
\begin{align*}
    L''(w)
    & = g''(w) \, (w - c) + 2 \, g'(w)
\end{align*} 
and
\begin{align*}
    h(x)
    & = p \, \bar F(x)
    = p \, e^{-(x/\lambda)^k}\\
    g(w)
    & = h^{-1}(w)
    = \lambda \, (\ln (p/w))^{1/k}\\
    g'(w)
    & = -\frac{\lambda}{k\,w} \, (\ln(p/w))^{1/k-1}\\
    g''(w)
    & = \frac{\lambda}{k\,w^2} \, (\ln(p/w))^{1/k-1}
    - \frac{\lambda\,(k-1)}{k^2\,w^2} (\ln(p/w))^{1/k-2},
\end{align*}
we have
\begin{align*}
    L''(w)
    & = - \frac{\lambda}{k^2\,w^2}  \, (\ln(p/w))^{1/k-2} \, \left(
        k \, (w + c) \, \ln(p/w)
        + (k-1)\,(w - c)
    \right)\\
    & \le 0
\end{align*}
where the last inequality holds when $\lambda>0$, $k\ge 1$, $w\in(c,p)$.
\end{proof}

\subsection{Price of anarchy of the \stack{} Equilibrium}
\label{s:inefficiency}

In \Cref{s:online-learning}, we presented several algorithmic ideas that allow supplier and retailer to gain an amount of revenue close to that of the SE.
This is possible despite the two players not cooperating with each other but trying to instead maximize their own individual utilities.
Given that, even in this circumstance, the utility at the equilibrium is achievable, it is now natural to investigate if the equilibrium is efficient.
In this section, we will show that, in general, this is not the case.

Consider the supply-chain stage game introduced in \Cref{s:stage-full-section}. 
Fix any $0<c<p<1$ and let $\cD \coloneqq \delta_c \otimes \delta_p \otimes \cU$ be the product distribution of a Dirac at $c$, a Dirac at $p$, and a uniform distribution $\cU$ on $[0,1]$.
Note that $\cD$ satisfies our base \cref{ass:distribution-stage}, with $f(d) \equiv f(d \mid c,p) \equiv 1$ , for all $d \in [0,1]$.
Let, as usual, $F(d) = \int_0^d f(x) \dif x = d$, for all $d\in [0,1]$.

In words, we are considering the one-shot supply-chain game, where the supplier's cost $c$ and the retail price $p$ are deterministic and the demand $D$ is uniform on $[0,1]$, and we are assuming that $c,p,\cU$ and $\cD$ are common knowledge to both players.

In this instance, the function $h$ ---recall \eqref{eq:hx}--- is defined, for any $x\in[0,1]$, by
$
    h(x)
=
    p - p x
$
and its inverse $g \equiv h^{-1}$ is given, for any $w \in [0,p]$, by
$
    g(w)
=
    \fracc{(p-w)}{p}
$.

We show now that the condition in \Cref{i:condition-unique-stack} of \Cref{t:unique-stack} holds.
Indeed, for any $w \in (c,p)$, 
\[
    \underbrace{-\frac{ -1/p }{ \fracc{(p-w)}{p} } }_{-\fracc{ g'(w) } { g(w) }}
=
    \frac{1}{w-c}
\iff
    \frac{w-c}{p}
=
    \frac{p-w}{p}
\iff
    w
=
    \frac{c+p}2
    \in (c,p)
\]
Hence, the set $\argmax_{w \in \{(\fracc{c+p)}{2}\}}g(w)(w-c)$ is trivially a singleton, and, applying \Cref{t:unique-stack}, the supply-chain stage game admits the unique SE
\[
    (w^\star,q^\star)
=
    \lrb{ \frac{c+p}2, \, \frac{p-c}{2p} }
\]
The expected social welfare is, for any strategy profile $(w,q)\in[0,1]^2$,
\[
    \us(w,q) + \ur(w,q) 
= 
    p \E[\min(q,D)] -cq
=
    p q (1-\fracc q2) - cq
\]
which, under vertical integration, is maximized at $(\tilde w, \tilde q)$, where $\tilde w$ is any wholesale price in $[0,1]$ and
\[
    \tilde{q} 
= 
    \frac{p-c}{p}
\]
Note that, in equilibrium, the retailer orders a quantity that is smaller than the quantity that would be ordered under vertical integration.
Moreover, under vertical integration, the optimal social welfare would be
\[
    \us(\tilde w,\tilde q) + \ur(\tilde w,\tilde q)
= 
    \frac{(p-c)^{2}}{2p}
\]
while the equilibrium social welfare is
\[
    \us(w^\star,q^\star) + \ur(w^\star,q^\star)
=
    \frac{3}{4} \frac{(p-c)^{2}}{2p}.
\]
Thus, in this game the price of anarchy is $4/3$.

\subsection{Proof of Theorem~\ref{th:buyer}}
\label{s:vert-int-appe}

Before stating the main result of this section, we prove an auxiliary lemma.

\begin{lemma}
\label{lem:exp3}
Let $T\in \N$, $V$ be a finite set of cardinality $K$, and fix any sequence $\loss_1,\ldots,\loss_T$ of nonnegative functions $\loss_t : V\to\R$. 
Fix $\eta > 0$ and let $w_1,\ldots,w_T$ be functions $w_t \colon V \to \R$ such that, for any $i \in V$,
\[
	w_t(i) 
= 
    \begin{cases}
        1 & \text{ if $t=1$}
    \\
		\exp \brb{ -\eta\sum_{s=1}^{t-1}\loss_s(i) } & \text{ otherwise}
    \end{cases}
\]
Then, for all $k \in V$,
\begin{align*}
    \sum_{t=1}^{T} \sum_{i \in V} p_{t}(i) \loss_t(i) - \sum_{t=1}^{T} \loss_t(k)  
\le
    \frac{\ln K}{\eta}
    + \frac{\eta}{2} \sum_{t=1}^{T} \sum_{i \in V} p_{t}(i) \loss_t(i)^{2}
\end{align*}
where $p_t(i) = w_t(i)/W_t$ and $W_t = \sum_{j \in V} w_t(j)$ for all $i \in V$ and $t \in [T]$.
\end{lemma}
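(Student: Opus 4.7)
The plan is to carry out a standard exponential-weights potential argument. I would introduce the normalizing constant $W_t \coloneqq \sum_{j\in V} w_t(j)$ so that $p_t(i) = w_t(i)/W_t$, and then estimate $\ln(W_{T+1}/W_1)$ in two different ways; the gap between the two estimates is exactly the desired regret bound.

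For the lower bound, I would fix any $k \in V$ and note that $w_1(k) = 1$ while $W_1 = K$, and that $w_{T+1}(k) = \exp\brb{-\eta \sum_{t=1}^{T} \loss_t(k)}$, so dropping all terms other than $k$ in $W_{T+1}$ gives
\[
    \ln \frac{W_{T+1}}{W_1}
\ge
    \ln \frac{w_{T+1}(k)}{W_1}
=
    -\eta \sum_{t=1}^{T} \loss_t(k) - \ln K.
\]

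For the upper bound, I would telescope $\ln(W_{T+1}/W_1) = \sum_{t=1}^{T} \ln(W_{t+1}/W_t)$. Since
\[
    \frac{W_{t+1}}{W_t}
=
    \sum_{i \in V} p_t(i) \, e^{-\eta \loss_t(i)},
\]
I would apply the elementary inequality $e^{-x} \le 1 - x + x^2/2$, valid for all $x \ge 0$ (which applies because the losses are nonnegative), followed by $\ln(1 + u) \le u$, to obtain
\[
    \ln \frac{W_{t+1}}{W_t}
\le
    - \eta \sum_{i \in V} p_t(i)\loss_t(i)
    + \frac{\eta^2}{2} \sum_{i \in V} p_t(i)\loss_t(i)^2.
\]
Summing over $t \in [T]$ yields the matching upper estimate for $\ln(W_{T+1}/W_1)$.

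Combining the two bounds and dividing by $\eta > 0$ gives the stated inequality, with a factor of $1/2$ in front of the variance-like term rather than a $1$; the statement in the lemma then follows directly. The only mildly delicate point is justifying $e^{-x} \le 1 - x + x^2/2$ for $x \ge 0$, which I would verify by noting that the function $f(x) \coloneqq 1 - x + x^2/2 - e^{-x}$ satisfies $f(0) = f'(0) = 0$ and $f''(x) = 1 - e^{-x} \ge 0$ on $[0,\infty)$. Everything else is routine manipulation, so I do not expect any substantial obstacle.
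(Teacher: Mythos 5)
Your proposal is correct and follows essentially the same route as the paper's proof: the same potential argument on $\ln(W_{T+1}/W_1)$, the same lower bound obtained by keeping only the term $w_{T+1}(k)$, and the same upper bound via $e^{-x}\le 1-x+x^2/2$ (valid since the losses are nonnegative) followed by $\ln(1+u)\le u$. No gaps.
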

\begin{proof}
For all $t\in [T]$, we have
\begin{align*}
    \frac{W_{t+1}}{W_t}
&=
    \sum_{i \in V} \frac{w_{t+1}(i)}{W_t}
=
    \sum_{i \in V} \frac{w_{t}(i)}{W_t}\,\exp\bigl(-\eta\,\loss_t(i)\bigr)
=
    \sum_{i \in V} p_{t}(i)\,\exp\bigl(-\eta\,\loss_t(i)\bigr)
\\ &\le
    \sum_{i \in V} p_{t}(i)\left(1 - \eta\,\loss_t(i) + \frac{\bigl(\eta\,\loss_t(i)\bigr)^2}{2}\right)
    \tag{using $e^{-x} \leq 1-x+x^2/2$ for all $x \ge 0$}
\\ &\le
    1 - \eta\sum_{i \in V} p_{t}(i)\loss_t(i)
    + \frac{\eta^2}{2}\sum_{i \in V} p_{t}(i)\loss_t(i)^2
\end{align*}
Taking logs, upper bounding $x\mapsto\log (1+x)$ with $x\mapsto x$, and summing over $t\in[T]$ yields
\begin{equation}
    \label{e:aaa}
    \ln\frac{W_{T+1}}{W_1}
\le
    - \eta\sum_{t=1}^T \sum_{i \in V} p_{t}(i)\loss_t(i)
    + \frac{\eta^2}{2} \sum_{t=1}^T\sum_{i \in V} p_{t}(i)\loss_t(i)^2
\end{equation}
Moreover, we also have, for any $k\in V$,
\begin{equation}
    \label{e:bbb}
    \ln\frac{W_{T+1}}{W_1}
\ge
    \ln\frac{w_{T+1}(k)}{W_1} = -\eta\sum_{t=1}^T \loss_t(k) - \ln K
\end{equation}
Putting \eqref{e:aaa} and \eqref{e:bbb} together, dividing both sides by $\eta > 0$, and rearranging gives the desired result.
\end{proof}

We can now prove \Cref{th:buyer}.

\begin{proof}[Proof of \Cref{th:buyer}]
We first control the regret associated with actions drawn from $\pi_t$ (the regret associated with $\mu_t$ will be studied as a direct consequence). Fix a time horizon $T$, any $(m,n) \in V$, and an arbitrary sequence of realizations $(I_1,J_1),\ldots,(I_T,J_T)$. This determines the nonnegative estimated losses $\hloss_t(i,j)$ and we can apply Lemma~\ref{lem:exp3} to the probabilities $\pi_t$ to get
\begin{equation}
\label{eq:boundestimatedloss}
	\sum_{t=1}^T \sum_{(i,j) \in V} \pi_{t}(i,j) \hloss_t(i,j) - \sum_{t=1}^T \hloss_t(m,n)
\le
	\frac{\eta}{2} \sum_{t=1}^T \sum_{(i,j) \in V} \pi_t(i,j) \hloss_t(i,j)^2  + \frac{2\ln(K+1)}{\eta}
\end{equation}
Writing $\E_{t-1}[\cdot]$ for the expectation conditioned on $(I_1,J_1),\dots,(I_{t-1},J_{t-1})$, we note that, for any $(i,j) \in V$, $\E_{t-1}\lsb{ \hloss_t(i,j) } = \ell_t(i,j)$ (i.e., $\hloss_t(i,j)$ is an unbiased estimate of $\ell_t(i,j)$) and
\[
	\E_{t-1}\Big[ \pi_t(i,j) \hloss_t(i,j)^2\Big] = \frac{\pi_t(i,j) \ell_t(i,j)^2}{\sum_{k=j}^{K+1} \mu_t(i,k)} 
\leq 
    \frac{\mu_t(i,j)}{(1-\gamma)\sum_{k=j}^{K+1} \mu_t(i,k)}
\]
where we used the definition of $\mu_t$ and the fact that $\ell_t(i,j)^2 \leq 1$ by construction. Therefore, taking expectations on both sides of~\eqref{eq:boundestimatedloss} implies
\begin{multline*}
	\E\left[\sum_{t=1}^T \sum_{(i,j) \in V} \pi_{t}(i,j) \loss_t(i,j)\right] - \sum_{t=1}^T \loss_t(m,n)
\\
\le
	\frac{\eta}{2(1-\gamma)} \sum_{t=1}^T \sum_{i=1}^K \E\left[\sum_{j=1}^{K+1} \frac{\mu_t(i,j)}{\sum_{k=j}^{K+1} \mu_t(i,k)}\right]  + \frac{2\ln(K+1)}{\eta}    
\end{multline*}
For any $t\in [T], i\in[K], j\in[K+1]$, set $s_t(j) \coloneqq \sum_{k=j}^{K+1} \mu_t(i,k)$.
Using $s_t(j)$, we can upper bound the sum inside the expectation with an integral.
For any fixed $i\in[K]$, we have
\begin{align*}
&	\sum_{j=1}^{K+1} \frac{\mu_t(i,j)}{\sum_{k=j}^{K+1} \mu_t(i,k)}
=
	1 + \sum_{j=1}^K \frac{s_t(j)-s_t(j+1)}{s_t(j)}
=
	1 + \sum_{j=1}^K \int_{s_t(j+1)}^{s_t(j)} \frac{\dd x}{s_t(j)}
\\&\le
	1 + \sum_{j=1}^K \int_{s_t(j+1)}^{s_t(j)} \frac{\dd x}{x}
=
	1 + \int_{\mu_t(i,K+1)}^{s_t(1)} \frac{\dd x}{x}
\le
	1 - \ln \mu_t(i,K+1)
\le
	1 + \ln \frac{K}{\gamma}
\end{align*}
where we used $s_t(1) \le 1$ and $\mu_t(i,K+1) \ge \gamma/K$. 
Therefore, substituting into the previous bound, we get
\begin{equation}
	\E\left[\sum_{t=1}^T \sum_{(i,j) \in V} \pi_{t}(i,j) \loss_t(i,j)\right] - \sum_{t=1}^T \loss_t(m,n)
\le
	\frac{\eta KT}{2 (1-\gamma)}\ln\frac{eK}{\gamma} + \frac{2\ln(K+1)}{\eta}
\label{eq:Exp3Floor-regretbeforemixing}
\end{equation}
We now control the regret of $(I_t,J_t)$ drawn from $\mu_t$. We have
\begin{align*}
&	\E\left[\sum_{t=1}^T \ell_t(I_t,J_t)\right] - \sum_{t=1}^T \ell_t(m,n)
\\&=
	\E\sum_{t=1}^T \left[ \sum_{i=1}^K \left( (1-\gamma) \sum_{j=1}^{K+1} \pi_t(i,j) \loss_t(i,j) + \frac{\gamma}{K} \ell_t(i,K+1) \right)\right] - \sum_{t=1}^T \ell_t(m,n)
\\&\le
	(1-\gamma) \E\left[\sum_{t=1}^T \sum_{(i,j) \in V} \pi_t(i,j) \loss_t(i,j) \right] + \gamma T - \sum_{t=1}^T \ell_t(m,n)
\\&\le
	\frac{\eta KT}{2}\ln\frac{eK}{\gamma} + \frac{2\ln(K+1)}{\eta} + \gamma T
\end{align*}
where the last inequality is by~\eqref{eq:Exp3Floor-regretbeforemixing}.

Translating back from losses to revenues, and using~\eqref{eq:lip}, gives
\[
	R_T
\le
	\eta KT\ln\frac{eK}{\gamma} + \frac{4\ln(K+1)}{\eta} + 4\gamma T
\]
Recalling that $K = \lceil 1/\gamma \rceil$, and choosing $\gamma = T^{-1/3}$ and $\eta = T^{-2/3}$ concludes the proof.
\end{proof}

\end{document}